\newcommand{\beq}{\begin{equation}}
\newcommand{\eeq}{\end{equation}}
\newcommand{\R}{\mathbb R}
\newcommand{\E}{\mathbb E}
\newcommand{\cX}{\mathcal{X}}
\newcommand{\cM}{\mathcal{M}}
\newcommand{\htheta}{\hat{\theta}}
\newcommand{\hxi}{\hat{\xi}}
\newcommand{\ac}[1]{\mathrm{AC}^{(#1)}}
\newcommand{\vol}{\operatorname{\mathrm{Vol}}}
\newcommand{\var}{\operatorname{\mathrm{Var}}}
\newcommand{\comp}{\operatorname{\mathrm{Comp}}}
\newcommand{\phixi}{\operatorname{{\Phi_\xi}}}
\newcommand{\phix}{\operatorname{{\Phi_x}}}
\newcommand{\defeq}{\stackrel{\mbox{\tiny{def}}}{=}}
\newtheorem{theorem}{Theorem}
\newtheorem{lemma}[theorem]{Lemma}
\newtheorem{remark}{Remark}
\newtheorem{regcond}{Condition}
\begin{document}

\title{Higher-order asymptotics for the parametric complexity}
\author{James G. Dowty}
\date{\today}

\maketitle

\abstract{
The parametric complexity is the key quantity in the minimum description length (MDL) approach to statistical model selection.  Rissanen and others have shown that the parametric complexity of a statistical model approaches a simple function of the Fisher information volume of the model as the sample size $n$ goes to infinity.  This paper derives higher-order asymptotic expansions for the parametric complexity, in the case of exponential families and independent and identically distributed data.  These higher-order approximations are calculated for some examples and are shown to have better finite-sample behaviour than Rissanen's approximation.  The higher-order terms are given as expressions involving cumulants (or, more naturally, the Amari-Chentsov tensors), and these terms are likely to be interesting in themselves since they arise naturally from the general information-theoretic principles underpinning MDL.
The derivation given here specializes to an alternative and arguably simpler proof of Rissanen's result (for the case considered here), proving for the first time that his approximation is $O(n^{-1})$.
}

\section{Introduction}
\label{S:intro}

The minimum description length (MDL) principle provides a general information-theoretic approach to model selection and other forms of statistical inference \citep{BarronEtAl98,Rissanen07}.  The MDL criterion for model selection is consistent, meaning that it will select the data-generating model from a countable set of competing parametric models with probability approaching $1$ as the sample size $n$ goes to infinity \citep{BarronCover91}.  For example, if each of the parametric models is a logistic regression model with predictor variables taken from a fixed set of potential predictors, then the MDL model-selection criterion will choose the correct combination of predictors with probability approaching $1$ as $n \to \infty$.  The MDL model-selection criterion also has a number of strong optimality properties, which greatly extend Shannon's noiseless coding theorem \citep[\S III.E]{BarronEtAl98}.

In its simplest form, the MDL principle advocates choosing the model for which the observed data has the shortest message length under a particular prefix code defined by a minimax condition
\citep[\S 2.4.3]{Grunwald05}. \citet{Shtarkov87} showed that this is equivalent to choosing the model with the largest normalized maximum likelihood (NML) for the observed data.  Here, minus the logarithm of the NML for a model $\cM$ and $n$ observations $x^n$ is
\begin{align*}
- \log p_n(x^n, \htheta(x^n)) + \comp
\end{align*}
where: $p_n(x^n, \theta)$ is the likelihood function for data $x^n$ at model parameter $\theta$; $\htheta(x^n)$ is the maximum likelihood estimate of $\theta$ corresponding to $x^n$; and the parametric complexity of the model $\cM$ is
\begin{align}
\label{E:pc_defn_general}
\comp = \log \int p_n(x^n, \htheta(x^n)) \; d\mu_n(x^n),
\end{align}
where the integral is over all possible values of the data $x^n$ (and is technically the Lebesgue integral with respect to some measure $\mu_n$, so the integral becomes a sum if $\mu_n$ is discrete).

The NML balances how well the model fits the data (quantified by the maximized likelihood) against how complex or versatile the model is (quantified by the parametric complexity), so the NML gives a natural measure of parsimony.  However, in many cases of interest, it is not practical to calculate the parametric complexity directly from the definition (\ref{E:pc_defn_general}).  For example, the data space for a logistic regression model consists of all binary sequences $x^n$ of length $n$, and the parametric complexity is defined in terms of the sum of $p_n(x^n, \htheta(x^n))$ over this space, so calculating this sum directly is infeasible even for $n = 100$ since there are $2^{100} \approx 1.27 \times 10^{30}$ terms.  \citet{ClarkeBarron90,ClarkeBarron94} therefore obtained $o(1)$ approximations to the parametric complexity in the limit $n \to \infty$ for independent and identically distributed (IID) data, and \citet{Rissanen96} greatly extended these results to the non-IID setting.

Approximations to $\comp$ are only sensible when $\comp$ is finite, but this condition fails in many cases.  In proving their $o(1)$ approximation,  \citet{ClarkeBarron90,ClarkeBarron94} and \citet{Rissanen96} therefore effectively restricted the data to those data points $x^n$ whose corresponding maximum likelihood estimates $\htheta(x^n)$ lie in a given compact (i.e., closed and bounded) subset $K$ of the parameter space $\Theta$.
Restricting $x^n$ to this set of data points in the integral (\ref{E:pc_defn_general}) gives a quantity $\comp(K)$ which is always finite in our main case of interest.
Then the approximation of \citep{ClarkeBarron90,ClarkeBarron94,Rissanen96} is
\beq \label{E:PC_volapprox}
\comp(K) =  \frac{d}{2}\log \frac{n}{2\pi} + \log \int_K d\pi_\Theta(\theta) + o(1)
\eeq
as $n \to \infty$, where $d$ is the dimension of $\Theta$, $\pi_\Theta$ is the Jeffreys prior on $\Theta$ (scaled so it does not depend on $n$) and $\int_K d\pi_\Theta(\theta) = \vol(K)$ is the Fisher information volume of $K$.

Suppose now that the data are IID and that $\cM$ is a natural exponential family, and let $\Theta \subseteq \R^d$ be any parameter space for $\cM$, not necessarily the natural parameterisation.  Assume that each distribution in $\cM$ satisfies Cram\'er's condition (which holds for all continuous distributions) and that $K \subseteq \Theta$ is a compact, $d$-dimensional subset of $\R^d$ with smooth boundary (see the end of Section \ref{S:defns} for precise statements of these conditions).  Then our main result is the following refinement of (\ref{E:PC_volapprox}), for any non-negative integer $s$:
\begin{align}
\label{E:PC_fullapprox}
\comp(K) = \frac{d}{2}\log \frac{n}{2\pi} + \log \int_K \left( \sum_{i=0}^s  F_i(\theta) n^{-i} \right) d\pi_\Theta(\theta) + o(n^{-s})
\end{align}
as $n \to \infty$, where each $F_i(\theta)$ is a smooth function which does not depend on $n$.

Our methods can be used to calculate the asymptotic expansion (\ref{E:PC_fullapprox}) for $\comp(K)$ to any desired degree of accuracy.  The first few of the functions $F_i(\theta)$ are given explicitly in Theorem \ref{T:Edgeworth_fn} in terms of the Fisher information metric and cumulants.  In particular, $F_0(\theta)=1$, so taking $s=0$ in (\ref{E:PC_fullapprox}) gives the approximation (\ref{E:PC_volapprox}).  Further, setting $s=1$ in (\ref{E:PC_fullapprox}) shows, at least for the case of exponential families and IID data, that the approximation (\ref{E:PC_volapprox}) is actually valid to order $O(n^{-1})$, which appears to be previously unknown.

The functions $F_i(\theta)$ which appear in (\ref{E:PC_fullapprox}) are likely to be interesting, since they arise naturally from the general, information-theoretic principles which underpin MDL.  These functions appear to be more related to the statistical aspects of the model $\cM$ than to its curvature or other geometrical aspects, see Remark \ref{R:geom}.  Theorem \ref{T:Edgeworth_fn} gives these functions in terms of the Fisher information metric and cumulants, but a more natural formulation in terms of the Amari-Chentsov tensors is also given in Section \ref{S:AC}.

The rest of this paper is set out as follows.  In Section \ref{S:defns} we define the main objects of this paper and fully describe the regularity conditions under which (\ref{E:PC_fullapprox}) holds.  In Section \ref{S:higherorder} we give an asymptotic expansion for the parametric complexity in terms of a function which arises in Edgeworth expansions.  Explicit formulae for the first few terms of this function are then obtained in Section \ref{S:integrand}, completing the proof of (\ref{E:PC_fullapprox}).  In Section \ref{S:examples}, the first two terms of (\ref{E:PC_fullapprox}) are calculated explicitly for some examples.  In Section \ref{S:finite_sample}, the finite-sample performance of (\ref{E:PC_volapprox}) and these higher order approximations are assessed, using an exact formula for the parametric complexity in the case of spherical normal data.  We give alternative, co-ordinate-independent formulae for the functions $F_i(\theta)$ in Section \ref{S:AC}, before finishing with a summary of the paper in Section \ref{S:conclusion}.  Appendix \ref{S:proofs} contains the proofs of all of our results and Appendix \ref{S:appendixB} gives a fairly self-contained description of Edgeworth expansions and the Hermite numbers.

\section{Definitions and regularity conditions}
\label{S:defns}

As in the Introduction, let $\cM$ be a natural exponential family of order $d$ \citep[\S 2.2.1]{KassVos97} and let $\Theta$ be any parameter space of $\cM$, which we assume without loss of generality is an open subset of $\R^d$.  Then $\cM$ is a set of probability measures on $\R^d$, with each distribution in $\cM$ of the form $p_1(\cdot,\theta) \mu_1$ for some $\theta \in \Theta$, where: $\mu_1$ is some measure on $\R^d$; $p_1(\cdot,\theta)$ is the function $x \mapsto p_1(x,\theta)$ of $x$ alone, with $\theta$ held fixed;
$$p_1(x,\theta) = \exp(x \cdot \eta(\theta) - \psi(\eta(\theta)));$$
$\eta: \Theta \to \R^d$ is the reparameterisation map from $\Theta$ to the natural parameter space (a diffeomorphism onto its image); $\psi$ is the log-partition function, which is determined by the condition that each measure $p_1(\cdot,\theta) \mu_1$ is normalised; and $x \cdot \eta(\theta)$ is the Euclidean inner product of $x$ and $\eta(\theta)$.

Let $\cM^n$ be the corresponding model for $n$ IID observations $x_1, \dots, x_n \in \R^d$ and let $x^n = (x_1, \dots, x_n) \in \R^{nd}$ be the observed data for this model.  Then $\cM^n$ is the set of all measures of the form $p_n(\cdot, \theta) \mu_n$ for $\theta \in \Theta$, where
$p_n(x^n, \theta) = p_1(x_1, \theta) \dots p_1(x_n, \theta)$ is the product likelihood and $\mu_n = \mu_1 \times \dots \times \mu_1$ is the product measure on $\R^{nd}$.

The (scaled) Fisher information metric $g_\Theta$ on $\Theta$ is the $d \times d$ matrix-valued function on $\Theta$ with $(i,j)^{th}$ entry
\begin{align}
\label{E:FI_defn}
g_{ij}(\theta) =
- \frac{1}{n} \int \left[\frac{\partial^2 \log p_n(x^n, \theta) }{\partial \theta_i \partial \theta_j}  \right]
p_n(x^n, \theta) \; d\mu_n(x^n)
\end{align}
for $i,j = 1, \dots, d$, where the factor of $1/n$ ensures that $g_\Theta$ does not depend on $n$ (since the data are IID).  Then the Jeffreys prior $\pi_\Theta$ is $\sqrt{\det g_\Theta}$ times the Lebesgue measure on $\R^d$, and so is equal to the Fisher information volume density on $\Theta$.

Let $\cX^n$ be the support of $\mu_n$, which can be interpreted as the set of all possible values of the observed data $x^n$.  Let $\htheta: \cX^n \to \Theta$ be the maximum likelihood estimator for the parameter $\theta$, so that $\htheta(x^n) \in \Theta$ is the maximum likelihood estimate corresponding to $x^n$.  This estimate does not exist for all $x^n$ but it is unique when it does exist, so $\htheta$ is well-defined if we restrict $\cX^n$ appropriately \citep[Corollary 9.6]{BarndorffNielsen78}.
If $K$ is any compact subset of $\Theta$ then let
$$\htheta^{-1}(K) = \{ x^n \in \cX^n \mid \htheta(x^n) \in K \}$$
be the set of data points $x^n$ whose maximum likelihood estimates exist and lie in $K$.  Then define
\begin{align}
\label{E:pc_defn_K}
\comp(K) = \log \int_{\htheta^{-1}(K)} p_n(x^n, \htheta(x^n)) \; d\mu_n(x^n)
\end{align}
to be the contribution to the parametric complexity corresponding to $K \subseteq \Theta$.

Now, the maximum likelihood estimate $t = \htheta(x^n) \in \Theta$ is a random variable, so write its distribution as
$q_n(\cdot,\theta) \nu_n$,
where $q_n(t,\theta)$ is some smooth function and $\nu_n$ is some measure on $\R^d$ which is independent of $\theta$.
In fact, the family of distributions $q_n(\cdot,\theta) \nu_n$ for all $\theta \in \Theta$ is also an exponential family, though we will not need this result here.  For now, we only use the fact that the maximum likelihood estimator is a sufficient statistic (by Theorem 2.2.6 of \citep{KassVos97} and the paragraph preceding it) and that the parametric complexity can be calculated from the distribution of any sufficient statistic \citep[\S III.F]{BarronEtAl98}.  Then (\ref{E:pc_defn_K}) and the argument in \citep[\S III.F]{BarronEtAl98} imply
\begin{align}
\label{E:PC_defn}
\comp(K) = \log \int_K  q_n(t,t) \; d\nu_n(t),
\end{align}
which is intuitively reasonable since $q_n(t,t)$ is the maximized likelihood.

\begin{remark}
\label{R:proofidea}
\citet{Rissanen96} proved (\ref{E:PC_volapprox}) by using the central limit theorem to approximate the integrand of (\ref{E:PC_defn}), despite the fact that the central limit theorem describes the distribution of $t$ for any fixed $\theta$
while the integrand of (\ref{E:PC_defn}) is $q_n(t,\theta(t))$, where $\theta(t)=t$ is not fixed.  To get around this problem, Rissanen effectively replaced the function $\theta(t)$ by a step function $\bar{\theta}(t)$ which closely approximates $\theta(t)$, then he applied the central limit theorem in each of the small sets where $\bar{\theta}(t)$ is constant (and hence $\theta$ is fixed).
By contrast, our approach will be to approximate the integral (\ref{E:PC_defn}) by (a multiple of) the double integral of $q_n(t, \theta)$ with respect to $t$ and $\theta$, where $(t,\theta)$ ranges over a small neighbourhood of the set
$\{ (t,\theta) \mid \theta = t \mbox{ and } t \in K \}$.
If we integrate with respect to $t$ first and $\theta$ second in this double integral then $\theta$ is fixed in the innermost integral, so we can apply the central limit theorem (and higher-order Edgeworth expansions) there.
\end{remark}

We now impose the following regularity conditions, which will be needed when we use Edgeworth expansions in the next section to largely prove (\ref{E:PC_fullapprox}).

\begin{regcond}
We assume that each distribution $Q$ in the exponential family $\cM$ satisfies Cram\'er's condition
\begin{align}
\label{E:condCramer}
\limsup_{\| z\| \to \infty} |\hat{Q}(z)| < 1,
\end{align}
where $\hat{Q}$ is the characteristic function of $Q$ \citep[eqn. 1.29]{BhattacharyaDenker90}.
Cram\'er's condition is satisfied for many interesting distributions, such as all continuous distributions (those which admit probability density functions) \citep[p. 57]{Hall92} and the distributions of the minimal sufficient statistics of continuous exponential families \citep[Lemma 1.5]{BhattacharyaDenker90}.
\end{regcond}

\begin{regcond}
\label{C:condK}
In addition to assuming that $K \subseteq \Theta \subseteq \R^d$ is compact, we assume that $K$ is a smooth $d$-manifold with boundary.  This means that each point of $K$ is at the centre of some small $d$-dimensional ball which intersects $K$ either in the whole ball (for interior points of $K$) or in approximately a half ball (for boundary points of $K$).
\end{regcond}

\section{Higher order asymptotics}
\label{S:higherorder}

In this section, we will mostly work in the expectation parameter space of the exponential family $\cM$, though the details of this parameter space are not important here beyond two specific properties, which we now recall.
Let $\Xi$ denote the expectation parameter space and let $\xi \in \Xi$ denote a generic expectation parameter (which is equal to the expected value of the sufficient statistic).  If $x_1, \dots, x_n$ are IID random variables governed by an unknown element of $\cM$ and $x^n = (x_1, \dots, x_n)$, as in Section \ref{S:defns}, then the first property is that the maximum likelihood estimator $\hat{\xi}$ of the expectation parameter $\xi$ is simply the mean, i.e.,
\beq
\label{E:hxixbar}
\hat{\xi}(x^n) = (x_1 + \dots + x_n)/n
\eeq
by Theorem 2.2.6 of \citep{KassVos97} and the comments preceding it.
The second property is that this estimator is exactly (as opposed to asymptotically) unbiased and efficient, meaning that
\beq
\label{E:meanvarxbar}
\E[\hat{\xi}] = \xi \mbox{ and } \var(\hat{\xi}) = (n g_\Xi(\xi))^{-1}
\eeq
by \citep[Theorems 2.2.1 and 2.2.5]{KassVos97}, where
$g_\Xi$ is the Fisher information metric on $\Xi$ and
the superscripted $-1$ indicates the matrix inverse.  The first property allows us to apply Edgeworth expansions directly to the maximum likelihood estimator $\hat{\xi}$ and the second property allows us to express the approximating distribution in terms of the Fisher information matrix.

Let $\phixi: \R^d \to \R$ be the probability density function (PDF) of the $d$-dimensional normal distribution $N_d(\xi,(n g_\Xi(\xi))^{-1})$ which has the same mean and variance as $\hat{\xi}$, so that
\beq
\label{E:phixi}
\phixi(x) = \left( \frac{n}{2\pi} \right)^{d/2} \sqrt{\det g_\Xi(\xi)} \exp\left( -\frac{n}{2} (x-\xi)^T g_\Xi(\xi) (x-\xi) \right)
\eeq
for any $x \in \R^d$.  As in Section \ref{S:defns} (but with $\xi$, $\hxi$, $\Xi$, $x$ in place of $\theta$, $\htheta$, $\Theta$, $t$), let the distribution of the maximum likelihood estimator $\hat{\xi}$ of $\xi$ be $q_n(\cdot,\xi) \nu_n$.
For any fixed $\xi \in \Xi$, (\ref{E:hxixbar}) and the central limit theorem imply that this distribution can be approximated as
\begin{align}
\label{E:CLT}
\int_B q_n(x,\xi) d\nu_n(x) = \int_B \phixi(x) d\lambda(x) + o(1)
\end{align}
as $n \to \infty$, where $\lambda$ is the Lebesgue measure on $\R^d$ and the approximation is uniform for all Borel sets $B \subseteq \Xi$ satisfying some weak regularity conditions (e.g. see \citep[Corollary 1.4]{BhattacharyaDenker90}).  Edgeworth expansions \citep{BarndorffNielsenCox79,BhattacharyaDenker90} refine this result to imply, for fixed $\xi$ and for any integer or half-integer $s \ge 0$, that there is a
function $H_s: \Xi \times \Xi \to \R$
so that
\begin{align}
\label{E:edgeworth_firstdefn}
\int_B q_n(x,\xi) d\nu_n(x) = \int_B H_s(x,\xi) \phixi(x) d\lambda(x) + o(n^{-s})
\end{align}
as $n \to \infty$, where the error is uniform for all Borel sets $B \subseteq \Xi$ satisfying weak regularity conditions, as for (\ref{E:CLT}).

The function $H_s(x,\xi)$ appearing in (\ref{E:edgeworth_firstdefn}) is a polynomial in $x$ and only depends on $\xi$ through the cumulants of the distribution in $\cM$ corresponding to $\xi$.
In Section \ref{S:integrand} we will give explicit formulae for the function $H_s(x,\xi)$ in the case of most interest to us.
However, even without an explicit formula, the defining property (\ref{E:edgeworth_firstdefn}) of $H_s(x,\xi)$ allows us to state and prove the following theorem, which is the main ingredient in our proof of (\ref{E:PC_fullapprox}).

\begin{theorem}
\label{T:PC}
Let $\cM$ and $K \subseteq \Theta$ satisfy the regularity conditions given at the end of Section \ref{S:defns}, where $\Theta$ is any parameter space for $\cM$.  Let $f:\Theta \to \Xi$ be the reparameterisation map from $\Theta$ to the expectation parameter space $\Xi$. Then for any integer $s\ge 0$,
\begin{align}
\label{E:PC_theta}
\comp(K) = \frac{d}{2}\log \frac{n}{2\pi} + \log \int_K H_s(f(\theta),f(\theta)) \; d\pi_\Theta(\theta) + o(n^{-s})
\end{align}
as $n \to \infty$.
\end{theorem}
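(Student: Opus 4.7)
The plan is to follow the approach described in Remark~\ref{R:proofidea}: work in the expectation parameter space $\Xi$, approximate the one-dimensional integral defining $\comp(K)$ by a double integral of $q_n(x,\xi)$ over a thin tube around the diagonal $\{(\xi,\xi):\xi\in f(K)\}\subseteq \Xi\times\Xi$, and extract the theorem by evaluating this double integral in two orders using Fubini.

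Since the MLE is equivariant under reparameterisation and the maximized likelihood is parameterisation-invariant, $\comp(K)$ can be rewritten via (\ref{E:PC_defn}) in $\xi$-coordinates using (\ref{E:hxixbar}). Applying a change of variables from $\Theta$ to $\Xi$ to the right-hand side of (\ref{E:PC_theta}), together with the identity $\phixi(\xi) = (n/2\pi)^{d/2}\sqrt{\det g_\Xi(\xi)}$ read off from (\ref{E:phixi}) at $x=\xi$, it suffices to establish
\beq
\label{E:target_prop}
I_K := \int_{f(K)} q_n(\xi,\xi)\,d\nu_n(\xi) = \int_{f(K)} H_s(\xi,\xi)\,\phixi(\xi)\,d\lambda(\xi) + o(n^{d/2-s}),
\eeq
since both sides are of order $n^{d/2}$ and the error becomes $o(n^{-s})$ after division and taking logarithms.

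Fix a scale $\epsilon_n = n^{-\alpha}$ with $1/2 < \alpha < 1/2 + s/d$ (non-empty for $s\ge 1$; the $s=0$ statement is weaker than and follows from the $s=1$ statement after taking the logarithm) and consider
\[
J := \iint_{\{\xi\in f(K),\,\|x-\xi\|<\epsilon_n\}} q_n(x,\xi)\,d\nu_n(x)\,d\lambda(\xi).
\]
Integrating first over $x$ with $\xi$ fixed, the Edgeworth expansion (\ref{E:edgeworth_firstdefn}) applied to the ball $\{x:\|x-\xi\|<\epsilon_n\}$, together with a Taylor expansion of $H_s(x,\xi)\phixi(x)$ in $x$ about $\xi$ (odd-order contributions vanishing by symmetry of the ball), yields $J = c_d\,\epsilon_n^d \int_{f(K)} H_s(\xi,\xi)\phixi(\xi)\,d\lambda(\xi)$ up to acceptable error, where $c_d$ is the volume of the unit $d$-ball. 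Integrating first over $\xi$ with $x$ fixed, the smoothness of the likelihood $\xi\mapsto q_n(x,\xi)$ about its maximum at $\xi=x$ gives $q_n(x,\xi) = q_n(x,x)(1+O(n\|\xi-x\|^2))$, so the inner integral equals $c_d\epsilon_n^d q_n(x,x)$ up to acceptable error for $x$ in the interior of $f(K)$; Condition~\ref{C:condK} then ensures that the $\epsilon_n$-neighbourhood of $\partial f(K)$ contributes negligibly, so $J = c_d\,\epsilon_n^d\, I_K$ up to acceptable error. Equating the two expressions for $J$ and dividing by $c_d\epsilon_n^d$ yields (\ref{E:target_prop}).

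The hard part is to balance the two Fubini evaluations so that all error terms are genuinely $o(n^{d/2-s})$ simultaneously. On the $x$-first side, a smaller $\epsilon_n$ (larger $\alpha$) shrinks the main term $c_d\epsilon_n^d H_s(\xi,\xi)\phixi(\xi)\sim n^{d/2}\epsilon_n^d$ and eventually lets the Edgeworth error $o(n^{-s})$ per $\xi$ dominate; on the $\xi$-first side, a larger $\epsilon_n$ (smaller $\alpha$) invalidates the Taylor expansion about the likelihood's mode because $n\epsilon_n^2\to 0$ is needed. The window $1/2<\alpha<1/2+s/d$ is precisely where both constraints can be met. Capturing the full $o(n^{d/2-s})$ accuracy then requires Taylor-expanding both $H_s(x,\xi)\phixi(x)$ in $x$ about $\xi$ and $q_n(x,\xi)/q_n(x,x)$ in $\xi$ about $x$ to sufficiently many orders (only even orders contributing by symmetry), together with the smooth-boundary hypothesis on $K$, which both controls the boundary strip and provides the mild set-regularity required by the Edgeworth expansion for the balls and half-balls arising in the double integral.
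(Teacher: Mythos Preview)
Your overall strategy matches the paper's --- thicken the diagonal to a tube, evaluate the resulting double integral in both orders via Fubini, and apply the Edgeworth expansion in the inner integral where the parameter is held fixed. The differences are in how the tube is set up (you take $\xi\in f(K)$, $x\in B(\xi,\epsilon_n)$ with Euclidean balls and $d\lambda(\xi)$; the paper takes $x\in K$, $\xi\in B(x,\delta)$ with Fisher-metric balls, a normalising factor $v_\delta^{-1}$, and $d\pi_\Xi(\xi)$) and, more importantly, in the scaling of the tube radius.

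There is a genuine gap in your error balance. You require $\alpha>\tfrac12$ so that $n\epsilon_n^2\to 0$ and the $\xi$-first Taylor step is controlled. But on the $x$-first side the Edgeworth remainder is $o(n^{-s})$ for each fixed $\xi$, hence $o(n^{-s})$ in $J$ after integrating over the compact set $f(K)$. To recover $I_K$ you must divide $J$ by $c_d\epsilon_n^d$, so this remainder contributes $o(n^{-s})\,\epsilon_n^{-d}=o(n^{\alpha d-s})$ to $I_K-M$. For that to be $o(n^{d/2-s})$ you need $\alpha d<d/2$, i.e.\ $\alpha<\tfrac12$, contradicting $\alpha>\tfrac12$. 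Higher-order Taylor expansions cannot rescue this: the Edgeworth $o(n^{-s})$ is a hard remainder, not a Taylor error, and survives unchanged. (Your upper bound $\alpha<\tfrac12+s/d$ only guarantees that the Edgeworth remainder is smaller than the \emph{leading} term of $J$; it does not give the accuracy you need after division.) A similar count shows that the boundary-strip contribution, once divided by $c_d\epsilon_n^d$, is of order $n^{d/2}\epsilon_n=n^{d/2-\alpha}$, which is $o(n^{d/2-s})$ only if $\alpha>s$ --- again incompatible with your window for $d\ge 2$, $s\ge 1$.

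The paper organises things so that no constraint like $n\epsilon_n^2\to 0$ is ever invoked: it puts $x\in K$ as the outer variable, so that in the $\xi$-first collapse (its Step~1) the inner integral is always over a \emph{full} ball, a zeroth-order Taylor expansion in $\xi$ is used, and the resulting $O(\delta)$ error is then killed by sending $\delta\to 0$ at rate $o(n^{-s})$ rather than by balancing competing powers of $n$.
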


\begin{proof}
See Section \ref{S:proof_T:PC} (and see Remark \ref{R:proofidea} for a heuristic description of the proof).
\end{proof}

\begin{remark}
Edgeworth expansions have a number of known weaknesses, such as potentially giving signed distributions (e.g., partly negative PDFs) and only controlling absolute errors, which are not very informative in the tails of distributions.  However, neither of these defects affects our results, since we only need Edgeworth expansions in small neighbourhoods of the mean.
\end{remark}

\section{The integrand on the natural parameter space}
\label{S:integrand}

In this section, we complete the proof of (\ref{E:PC_fullapprox}) by calculating the integrand $H_s(f(\theta),f(\theta))$ appearing in (\ref{E:PC_theta}) for the special case when $\Theta$ is the natural parameter space of the exponential family.  Note that the corresponding function for any other parameter space is simply obtained by composing the function found in this section with the appropriate reparameterisation map (see the first paragraph of Section \ref{S:proof_T:PC}).

So let $\Theta$ be the natural parameter space and let $\psi: \Theta \to \R$ be the log-partition function of the natural exponential family $\cM$, so that the distribution of one data point $x$ is
\begin{align}
\label{E:natparam}
e^{x \cdot \theta - \psi(\theta)} \mu_1
\end{align}
for some measure $\mu_1$ on $\R^d$ which does not depend on $\theta$.
The cumulant generating function of the distribution (\ref{E:natparam}) is $K(t) = \psi(t+\theta) - \psi(\theta)$
\citep[eq. 2.2.4]{KassVos97}, so for any integer $r \ge 1$ and any $i_1, \dots, i_r \in \{1, \dots, d\}$, the cumulant $\kappa_{i_1 \dots i_r}$ of the distribution (\ref{E:natparam}) is
\begin{align}
\label{E:cumulant}
\kappa_{i_1 \dots i_r}
= \left.\frac{\partial^r K}{\partial t_{i_1} \dots \partial t_{i_r}}\right|_{t=0}
= \left.\frac{\partial^r \psi}{\partial \theta_{i_1} \dots \partial \theta_{i_r}}\right|_{\theta=\theta}.
\end{align}
Let $g^{ij} = (g_\Theta(\theta)^{-1})_{ij}$ be the $(i,j)^{th}$ component of the matrix inverse of the Fisher information metric $g_\Theta(\theta)$ on $\Theta$.

\begin{theorem}
\label{T:Edgeworth_fn}  For $\Theta$ the natural parameter space, the integrand appearing in (\ref{E:PC_theta}) is given by
$$ H_s(f(\theta),f(\theta)) = \sum_{i=0}^s  F_i(\theta) n^{-i}$$
for any $\theta \in \Theta$, where each $F_i(\theta)$ is a function of $\theta$ but not $n$.  In particular, $F_0(\theta) = 1$,
\begin{align}
\label{E:S1}
F_1(\theta) =& \frac{1}{8} \sum_{i_1, \dots, i_4}  \kappa_{i_1 \dots i_4} g^{i_1i_2}g^{i_3i_4}  -
\frac{1}{8} \sum_{i_1, \dots, i_6}  \kappa_{i_1 i_2 i_3} \kappa_{i_4 i_5 i_6} g^{i_1i_2}g^{i_3i_4}g^{i_5i_6} \nonumber \\
& - \frac{1}{12} \sum_{i_1, \dots, i_6}  \kappa_{i_1 i_2 i_3} \kappa_{i_4 i_5 i_6} g^{i_1i_4}g^{i_2i_5}g^{i_3i_6}
\end{align}
and
\begin{align}
F_2(\theta) =& \frac{1}{720} \sum_{i_1, \dots, i_6}  \kappa_{i_1 \dots i_6} h_{i_1 \dots i_6}  +
\frac{1}{720} \sum_{i_1, \dots, i_8}  \kappa_{i_1 i_2 i_3} \kappa_{i_4 \dots i_8} h_{i_1 \dots i_8} + \nonumber \\
& \frac{1}{1152} \sum_{i_1, \dots, i_8}  \kappa_{i_1 \dots i_4} \kappa_{i_5 \dots i_8} h_{i_1 \dots i_8}
+ \frac{1}{1728} \sum_{i_1, \dots, i_{10}}  \kappa_{i_1 i_2 i_3} \kappa_{i_4 i_5 i_6} \kappa_{i_7 \dots i_{10}} h_{i_1 \dots i_{10}} + \nonumber  \\
& \frac{1}{31104} \sum_{i_1, \dots, i_{12}}  \kappa_{i_1 i_2 i_3} \kappa_{i_4 i_5 i_6} \kappa_{i_7 i_8 i_9} \kappa_{i_{10} i_{11} i_{12}} h_{i_1 \dots i_{12}}, \label{E:S2}
\end{align}
where $h_{i_1 \dots i_r}$ is defined below and each index $i_k$ ranges from $1$ to $d$ in each sum, e.g.
$$ \sum_{i_1, \dots, i_4} \mbox{ is shorthand for } \sum_{i_1=1}^d \sum_{i_2=1}^d \sum_{i_3=1}^d \sum_{i_4=1}^d.$$
\end{theorem}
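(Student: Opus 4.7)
The plan is to compute $H_s(\xi,\xi)$ directly from the multivariate Edgeworth expansion of the density $q_n(\cdot,\xi)$ of the sample mean $\hxi$, then extract each $F_i(\theta)$ by sorting by powers of $n^{-1}$. The computation simplifies dramatically because we only need the Edgeworth integrand at $x = \xi = f(\theta)$, which is the mean of the Gaussian $\phixi$, so multivariate Hermite polynomials collapse to Hermite numbers.

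First I would derive the formal Edgeworth expansion of $q_n(x,\xi)/\phixi(x)$. Since $\hxi = (x_1 + \dots + x_n)/n$ is the average of $n$ IID observations whose single-observation cumulants are the $\kappa_{i_1\dots i_r}$ of (\ref{E:cumulant}) (with $\kappa_{ij} = g_{ij}(\theta)$ in the natural parameterisation), a standard calculation gives
\[
\log E\bigl[\exp(it \cdot \sqrt{n}(\hxi - \xi))\bigr] = -\tfrac{1}{2} \sum_{i,j} g_{ij}(\theta)\, t_i t_j + \sum_{r\ge 3} \frac{i^r}{r!}\, n^{1-r/2} \sum_{i_1,\dots,i_r} \kappa_{i_1\dots i_r}\, t_{i_1}\cdots t_{i_r}.
\]
Exponentiating the tail (the sum over $r\ge 3$) as a formal power series in $n^{-1/2}$ and inverse-Fourier-transforming produces $H_s(x,\xi)\phixi(x)$ as a linear combination of products of cumulants multiplied by multivariate Hermite polynomials with respect to $\phixi$.

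Next I would evaluate at $x = \xi$. At the mean of a Gaussian the multivariate Hermite polynomials reduce to the Hermite numbers $h_{i_1\dots i_r}$, which are fully symmetric tensors built from the inverse covariance entries $g^{ij}$. The crucial structural fact is that $h_{i_1\dots i_r} = 0$ whenever $r$ is odd, which annihilates every Edgeworth contribution whose total cumulant-degree is odd. This is precisely what eliminates all half-integer powers of $n^{-1/2}$ from $H_s(\xi,\xi)$, yielding the claimed decomposition $H_s(f(\theta), f(\theta)) = \sum_{i=0}^s F_i(\theta)\, n^{-i}$, with $F_0 = 1$ from the leading Gaussian approximation.

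The remaining step would identify each $F_i(\theta)$ by enumerating the partitions of an even total cumulant-degree into parts of size $\ge 3$. For $F_1$ the relevant partitions are $\{4\}$ (total degree $4$) and $\{3,3\}$ (total degree $6$); the $1/r!$ weights in the cumulant series, the $1/2!$ from squaring the $r=3$ piece, and the multiplicity of pairings inside each $h_{i_1\dots i_r}$ combine with the total symmetry of the $\kappa$ tensors to produce the coefficients $1/8$, $-1/8$ and $-1/12$ in (\ref{E:S1}). For $F_2$ the contributing partitions are $\{6\}$, $\{3,5\}$, $\{4,4\}$, $\{3,3,4\}$ and $\{3,3,3,3\}$, with Hermite-number contractions of orders $6, 8, 8, 10$ and $12$ respectively; the same bookkeeping yields the rational coefficients of (\ref{E:S2}). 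The main obstacle is this combinatorial accounting, since one must correctly combine the factorial weights from the cumulant expansion, the multinomial factors from exponentiating repeated cumulant orders, the pairing counts inside each $h_{i_1\dots i_r}$, and the symmetry reductions imposed by the fully symmetric $\kappa$ tensors. I would organise the calculation via the standard partition/Bell-polynomial framework so that each partition's contribution is computed mechanically, and cross-check the answer against known one-dimensional Edgeworth coefficients.
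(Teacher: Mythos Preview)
Your proposal is correct and follows essentially the same route as the paper: derive the Edgeworth expansion of the distribution of the standardized MLE $\sqrt{n}(\hxi-\xi)$ from its cumulant generating function, evaluate the resulting Hermite-polynomial correction at the mean so that only the (even-order) Hermite numbers survive, and then do the partition bookkeeping to read off $F_0,F_1,F_2$. The paper packages this slightly differently---it first cites the Edgeworth formulae of Barndorff-Nielsen--Cox and Kass--Vos and obtains $H_s(\xi,\xi)=\widetilde H_s(0)$ by a change of variables, relegating the from-scratch cumulant/Hermite derivation you sketch to Appendix~\ref{S:appendixB}---but the underlying computation is the same.
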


\begin{proof}
The proof just rescales the Edgeworth expansions from \citet{BarndorffNielsenCox79} and specialises them to exponential families, see Section \ref{S:proof_T:Edgeworth_fn} for details.
\end{proof}

The expressions $h_{i_1 \dots i_r}$ appearing in (\ref{E:S2}), where $r=2k$ is an even positive integer, are the (un-normalized) Hermite numbers given by
\begin{align}
\label{E:hermite_defn}
h_{i_1 \dots i_r}
&= \frac{(-1)^k}{k!} \; \frac{\partial^r}{\partial z_{i_1} \dots \partial z_{i_r}} \left[\left(\sum_{a,b=1}^d g^{ab} z_a z_b/2 \right)^{k} \right],
\end{align}
where $z_1, \dots, z_d$ are dummy variables, see Theorem \ref{T:Hermite_numbers} in Appendix \ref{S:appendixB}.  The expression in square brackets is a degree $r$ polynomial in $z_1, \dots, z_d$ so $h_{i_1 \dots i_r}$ does not depend on $z_1, \dots, z_d$.  Up to sign, $h_{i_1 \dots i_r}$ is a certain sum of products of components $g^{ab}$ of $g_\Theta(\theta)^{-1}$, e.g. see (\ref{E:conversion_h4}) and (\ref{E:conversion_h6}).

\begin{remark}
The un-normalized Hermite numbers (which corresponding to a normal distribution with a general, rather than identity, variance-covariance matrix) also arise in Gaussian processes and quantum field theory \citep[eqn. 1 and 2]{Nelson73}.
\end{remark}

\begin{remark}
\label{R:geom}
The geometrical or statistical significance of the functions $F_1(\theta), F_2(\theta), \dots$ is not clear.  $F_1(\theta)$ does not appear to be any sort of contraction of the Riemann curvature tensor $R$ (for either the Levi-Civita connection or any of Amari's other $\alpha$-connections \citep[\S 2.3]{AmariNagaoka00}) because in the natural parameter space, $R$ only involves third derivatives of the log-partition function $\psi$ whereas $F_1(\theta)$ involves fourth derivatives.  Some of the terms in (\ref{E:S1}) have a superficial similarity to Efron's curvature (see \citep{Efron75} or \citep[eqn. 4.3.10]{KassVos97}), but Efron's formula measures the extrinsic curvature of curved exponential families and so is $0$ for exponential families.  In Section \ref{S:AC}, we give an expression for $F_1(\theta)$ that is valid in any parameter space, in terms of contractions of products of the Amari-Chentsov tensors.
\end{remark}

\section{Examples}
\label{S:examples}

It is possible to calculate the functions $F_i(\theta)$ appearing in (\ref{E:PC_fullapprox}) by using the formulae of Theorem \ref{T:Edgeworth_fn} and a formula-manipulation program like Maxima \citep{maxima5.31.2}.  This is especially easy when $d$ is small, though it can also be done for all $d$, as we now illustrate.

\subsection{Exponential data}

Let $\cM$ be the family of exponential distributions on $\R$, so each observation $x$ is governed by some distribution of the form (\ref{E:natparam}), where $\mu_1$ is the Lebesgue measure restricted to the positive reals, the natural parameter space $\Theta$ is the set of negative reals, $-\theta > 0$ is the rate parameter of the exponential distribution and the log-partition function is
$$\psi(\theta) = - \log(- \theta).$$
Each distribution in $\cM$ is continuous so it satisfies Cram\'er's condition by \citep[p. 57]{Hall92}.

Since the parameter space $\Theta$ of this family has dimension $d=1$, the indices $i_k$ in (\ref{E:S1}) and (\ref{E:S2}) all take the value $i_k=1$, so each sum consists of a single term.  Also, $g_{11}$ is the second derivative of $\psi(\theta)$ by (\ref{E:FI_defn}), so
$g_{11} = \theta^{-2}$, and the matrix inverse of the $1 \times 1$ matrix $g_\Theta$ has the single component $g^{11}
= \theta^2$.  Also, by (\ref{E:cumulant}), the cumulants appearing in Theorem \ref{T:Edgeworth_fn} are just the derivatives of $\psi(\theta)$, e.g. $\kappa_{111} = -2 \theta^{-3}$ and $\kappa_{1111} = 6 \theta^{-4}$.  So by (\ref{E:S1}),
\begin{align*}
F_1(\theta)
=& \frac{1}{8} \kappa_{1111} g^{11}g^{11} - \frac{1}{8} \kappa_{111} \kappa_{111} g^{11}g^{11}g^{11}
- \frac{1}{12} \kappa_{111} \kappa_{111} g^{11}g^{11}g^{11} \\
=& \frac{1}{8} (6 \theta^{-4}) (\theta^2)^2 - \frac{1}{8} (-2 \theta^{-3})^2 (\theta^2)^3 - \frac{1}{12} (-2 \theta^{-3})^2 (\theta^2)^3\\
=& -1/12.
\end{align*}
A similar calculation, based on (\ref{E:S2}) and aided by Maxima \citep{maxima5.31.2}, shows that
$$ F_2(\theta) = 1/288. $$
Since we have just shown that $F_1(\theta)$ and $F_2(\theta)$ are constant in $\theta$, we will now write $F_1$ and $F_2$ for these constant functions.  Then (\ref{E:PC_fullapprox}) implies
\begin{align}
\comp(K)
&= \frac{d}{2}\log \frac{n}{2\pi} + \log \int_K \left( 1 +  F_1 n^{-1} + F_2 n^{-2} \right) d\pi_\Theta(\theta) + O(n^{-3}) \nonumber \\
&= \frac{d}{2}\log \frac{n}{2\pi} + \log \int_K d\pi_\Theta(\theta) + \log \left( 1 +  F_1 n^{-1} + F_2 n^{-2} \right) + O(n^{-3}) \nonumber \\
&= \frac{d}{2}\log \frac{n}{2\pi} + \log \vol(K) + \left(F_1 n^{-1} + F_2 n^{-2} \right) - \frac{1}{2} \left(F_1 n^{-1} + F_2 n^{-2} \right)^2 + O(n^{-3}) \nonumber \\
&= \frac{d}{2}\log \frac{n}{2\pi} + \log \vol(K) + F_1 n^{-1} + \left(F_2 - F_1^2/2 \right) n^{-2} + O(n^{-3}),  \label{E:PC_const}
\end{align}
where the second last step uses the asymptotic expansion $\log(1+z) = z - z^2/2 + O(z^3)$ as $z \to 0$.
Substituting the above values of $F_1$ and $F_2$ into (\ref{E:PC_const}) therefore gives
$$ \comp(K) = \frac{d}{2}\log \frac{n}{2\pi} + \log \vol(K) - \frac{1}{12 n} + O(n^{-3}) $$
because $F_2 - F_1^2/2$ vanishes.

Note that (\ref{E:PC_const}) holds for general $d$ and even for non-constant $F_1(\theta)$ and $F_2(\theta)$ if we replace each $F_i$ in (\ref{E:PC_const}) by the expected value of $F_i(\theta)$, where $\theta$ is regarded as a random variable whose distribution is proportional to the Jeffreys prior restricted to $K$.

\subsection{Spherical normal data with unknown variance}
\label{S:spherical_unknown}

We now consider a model where each observation $y$ is distributed according to the $(d-1)$-dimensional spherical normal distribution $N_{d-1}(\beta, \sigma^2 I_{d-1})$, where $I_{d-1}$ is the $(d-1) \times (d-1)$ identity matrix and the parameters $\beta \in \R^{d-1}$ and $\sigma > 0$ are to be estimated.  One reason for studying this model is that an exact formula for $\comp(K)$ is known, so in Section \ref{S:finite_sample} we will verify the expression for $F_1(\theta)$ given here.

For this model, a sufficient statistic $x$ and the corresponding natural parameter $\theta$ are given by
\begin{align}
x = \left[ \begin{array}{cc} y \\ \| y \|^2  \end{array} \right]
\mbox{ and }
\theta = \frac{1}{\sigma^2} \left[ \begin{array}{cc} \beta \\ -1/2  \end{array} \right],
\end{align}
where $y = [y_{1} \; \dots \; y_{d-1}]^T$ and $\| y \|^2 = y_{1}^2 + \dots + y_{d-1}^2$.  Then $x$ is governed by a member of a natural exponential family of the form (\ref{E:natparam}), with log-partition function
$$ \psi(\theta) = \left(\frac{1-d}{2}\right) \log(-2\theta_d) - (\theta_1^2 + \dots + \theta_{d-1}^2)/4\theta_d. $$
Note that Cram\'er's condition holds for each distribution in this natural exponential family, by \citep[Lemma 1.5]{BhattacharyaDenker90}.  The (scaled) Fisher information metric $g_\Theta(\theta)$ is the Hessian of $\psi(\theta)$, by (\ref{E:FI_defn}) or \citep[Theorem 2.2.5]{KassVos97}, and the matrix inverse of $g_\Theta$ has $(i,j)^{th}$ component
\begin{align}
g^{ij} =
\left\{ \begin{array}{ll}
\theta_i \theta_j (d-1)/2 - 2 \theta_d &\mbox{ if $i=j \not= d$} \\
\theta_i \theta_j (d-1)/2 &\mbox{ otherwise.}
\end{array} \right.
\end{align}
Also, the cumulants $\kappa_{i_1 \dots i_r}$ can be obtained by differentiating $\psi$, as in (\ref{E:cumulant}).  Then a calculation based on (\ref{E:S1}) (which was aided by Maxima \citep{maxima5.31.2} and which considered all combinations of the two cases $i_k < d$ and $i_k=d$ for each index $i_k$) shows that the three sums appearing in (\ref{E:S1}) are constant in $\theta$ and are equal to the expressions in square brackets below:
\begin{align}
F_1(\theta) =& \frac{1}{8} \left[\frac{8d+4}{d-1}\right]  -
\frac{1}{8} \left[\frac{2d^2+4d+2}{d-1}\right] - \frac{1}{12} \left[\frac{6d+2}{d-1}\right] \nonumber \\
= & \frac{1 - 3d^2}{12(d-1)}. \label{E:spherical_F1}
\end{align}
So when $s=1$, (\ref{E:PC_fullapprox}) becomes
\begin{align}
\comp(K)
=& \frac{d}{2}\log \frac{n}{2\pi} + \log \int_K \left[ 1 + \frac{1 - 3d^2}{12(d-1)n} \right] \; d\pi_\Theta(\theta) + o(n^{-1}) \nonumber \\
=& \frac{d}{2}\log \frac{n}{2\pi} + \log \int_K d\pi_\Theta(\theta) + \log \left[ 1 + \frac{1 - 3d^2}{12(d-1)n} \right] + o(n^{-1}) \nonumber \\
=& \frac{d}{2}\log \frac{n}{2\pi} + \log \vol(K) + \frac{1 - 3d^2}{12(d-1)n} + o(n^{-1}) \label{E:comp_spherical}
\end{align}
as $n \to \infty$.  Note that this is just the usual approximation (\ref{E:PC_volapprox}) plus the correction term $(1 - 3d^2)/12(d-1)n$.  We will investigate the finite-sample performance of the two approximations (\ref{E:comp_spherical}) and (\ref{E:PC_volapprox}) in Section \ref{S:finite_sample}, below.

\subsection{Spherical normal data with known variance}

For spherical normal data with known variance, the log-partition function $\psi(\theta)$ is quadratic in $\theta$, so all of the cumulants $\kappa_{i_1\ldots i_r}$ vanish for $r \ge 3$, and hence the functions $F_i(\theta)$ also vanish for $i \ge 1$. So by (\ref{E:PC_fullapprox}), the standard formula (\ref{E:PC_volapprox}) is accurate to order $o(n^{-s})$ for all integers $s\ge 0$.  This is a positive check on our formulae, because it is well-known that (\ref{E:PC_volapprox}) is the exact formula for this model, as we now briefly demonstrate.

Working in the expectation parameter space, if $x_1, \dots, x_n \sim N_d(\xi,\sigma^2 I_d)$ for $\sigma > 0$ known and $\xi \in \R^d$ unknown, then $\hxi = (x_1 + \dots + x_n)/n$ and hence $\hxi \sim N_d(\xi,(\sigma^2/n) I_d)$.  Therefore $\nu_n = \lambda$ and
$$ q_n(x,\xi) = \left(\frac{n}{2 \pi \sigma^2}\right)^{d/2} \exp\left( - \frac{n}{2 \sigma^2} \| x - \xi \|^2 \right) $$
so $q_n(\xi,\xi) = (n/2 \pi \sigma^2)^{d/2}$.  Hence if $K \subseteq \Xi$ then the exact formula for $\comp(K)$ is
$$ \comp(K) = \log \int_K q_n(\xi,\xi) d\nu_n(\xi) = \log \int_K (n/2 \pi \sigma^2)^{d/2} d\lambda(\xi)
=  \frac{d}{2}\log \frac{n}{2\pi} + \log \vol(K) $$
since $\pi_\Xi = \lambda \sigma^{-d}$.

\section{Finite sample performance}
\label{S:finite_sample}

In this section we compare the finite-sample performance of the standard approximation (\ref{E:PC_volapprox}) and its first-order correction (i.e., (\ref{E:PC_fullapprox}) with $s=1$), in the case of the spherical normal model with unknown variance (see Section \ref{S:spherical_unknown}).

\citet{Rissanen00} studied the linear regression model with the parameterisation $\theta = (\beta,\tau)$, where $\tau = \sigma^2$ is the noise variance and $\beta$ is the regression coefficient (note that this is not the natural parameterisation of this exponential family).  Let $t = (\hat{\beta}, \hat{\tau})$ be the maximum likelihood estimate for $\theta$.  If the model has $N$ observations, $k$ covariates and design matrix $A$ (i.e., $A$ is the $N \times k$ matrix whose columns are the covariates of the model) then Rissanen showed that the maximized likelihood is
\begin{align}
\label{E:preexact}
q(t,t) =  \frac{\sqrt{\det A^T A}}{(\pi N)^{k/2} \Gamma(\frac{N-k}{2})} \left(\frac{N}{2e}\right)^{N/2} \hat{\tau}^{-1-k/2}
\end{align}
with respect to the Lebesgue measure $\lambda$, where $\Gamma$ is the gamma function.  It is not hard to show
that the unscaled Jeffrey's prior $\bar{\pi}_\Theta$ for the above parameterisation is given by
\begin{align}
\label{E:jeffreys}
d\bar{\pi}_\Theta(\theta) = \tau^{-1-k/2} \sqrt{(N/2) \det A^T A} \; d\lambda(\theta),
\end{align}
where `unscaled' means that $\bar{\pi}_\Theta$ is the volume density of the unscaled Fisher information metric, which is $n$ times the right-hand side of (\ref{E:FI_defn}).
Therefore
\begin{align}
\comp(K)
&= \log \int_K q(t,t) d\lambda(t) \mbox{ by (\ref{E:PC_defn})} \nonumber \\
&= \log \int_K q(\theta,\theta) d\lambda(\theta) \mbox{ since $t$ is a dummy variable} \nonumber \\
&= \log \int_K  \frac{(\pi N)^{-k/2}}{ \Gamma(\frac{N-k}{2})} \sqrt{\frac{2}{N}} \left(\frac{N}{2e}\right)^{N/2} d\bar{\pi}_\Theta(\theta) \mbox{ by (\ref{E:preexact}) and (\ref{E:jeffreys})} \nonumber \\
&=  \log \int_K d\bar{\pi}_\Theta(\theta) + \log \left(\frac{(\pi N)^{-k/2}}{ \Gamma(\frac{N-k}{2})} \sqrt{\frac{2}{N}} \left(\frac{N}{2e}\right)^{N/2} \right) \label{E:exact}
\end{align}
where the last step follows because the integrand is constant in $\theta$.  Note that this is not an approximation, $\comp(K)$ is given exactly by (\ref{E:exact}).

Now, take $A$ to be the $N \times k$ matrix $A = [I_k \cdots I_k]^T$, where $k=d-1$, $N=nk$ and $I_k$ is the $k \times k$ identity matrix.  Then the regression response variable $[y_1 \cdots y_n]^T$ is distributed in such a way that $y_1, \dots, y_n \sim N_{d-1}(\beta, \sigma^2 I_{d-1})$ are IID, as in Section \ref{S:spherical_unknown}.  Substituting these values into (\ref{E:exact}) therefore gives the following exact formula for the parametric complexity $\comp(K)$ of the model of Section \ref{S:spherical_unknown}:
\begin{align}
\comp(K)
&= \log \vol(K) - \frac{k}{2}\log(\pi k) - \frac{1}{2}\log \frac{k}{2} + \frac{nk}{2}\log \frac{nk}{2e} - \log \Gamma\left( \frac{k(n-1)}{2} \right), \label{E:comp_exact}
\end{align}
where we have retained the notation $k=d-1$ for convenience and where, as is standard throughout this paper, $\vol(K) = \int_K d\pi_\Theta(\theta) = n^{-d/2} \int_K d\bar{\pi}_\Theta(\theta)$ is the volume of $K$ with respect to the scaled Jeffreys prior $\pi_\Theta$.  We can use this exact formula in two ways:  to assess the accuracy of our asymptotic expansion and to check the veracity of our calculations.

Figure \ref{F:vs_n} assesses the accuracy of both the standard approximation (\ref{E:PC_volapprox}) to $\comp(K)$, in the case of spherical normal data with unknown variance, and its first-order correction (\ref{E:comp_spherical}).  Note that the summand $\log \vol(K)$ cancels when comparing these two approximations to the exact formula (\ref{E:comp_exact}), so the the amount of  over- or underestimation of $\comp(K)$ does not depend on $K$.  Figure \ref{F:vs_n} shows that both formulae overestimate $\comp(K)$, and hence NML codelength, for all model dimensions $d$ and all sample sizes $n \ge d$, with the amount of overestimation increasing with $d$ and decreasing with $n$ (note that results cannot be calculated for the under-determined models with $n<d$.)  The overestimation is negligible for the corrected formula (\ref{E:comp_spherical}) when $n \ge 20$, but it is significant for the standard formula (\ref{E:PC_volapprox}) for moderate $n$, e.g. it is greater than $0.05$ for $n=50$ and $d=11$.

Substituting an asymptotic expansion for the gamma function into (\ref{E:comp_exact}) gives
\begin{align*}
\comp(K)
&= \frac{d}{2}\log \frac{n}{2\pi} + \log \vol(K)
- n^{-1}\left[ \frac{3d^2-1}{12(d-1)} \right] - n^{-2}\left[ \frac{d(d+1)}{12(d-1)} \right] + O(n^{-3}).
\end{align*}
Comparing the above formula to (\ref{E:comp_spherical}), which was obtained directly from Theorems \ref{T:PC} and \ref{T:Edgeworth_fn}, shows that our expression (\ref{E:spherical_F1}) for $F_1(\theta)$
is correct.
The coefficient of $n^{-2}$ in the above formula also verifies our calculation (not shown here) of $F_2(\theta)$ for $d=2$.
This is therefore a positive check on the formulae of Theorems \ref{T:PC} and \ref{T:Edgeworth_fn}.

\begin{figure}
\centering
\includegraphics[width=8cm]{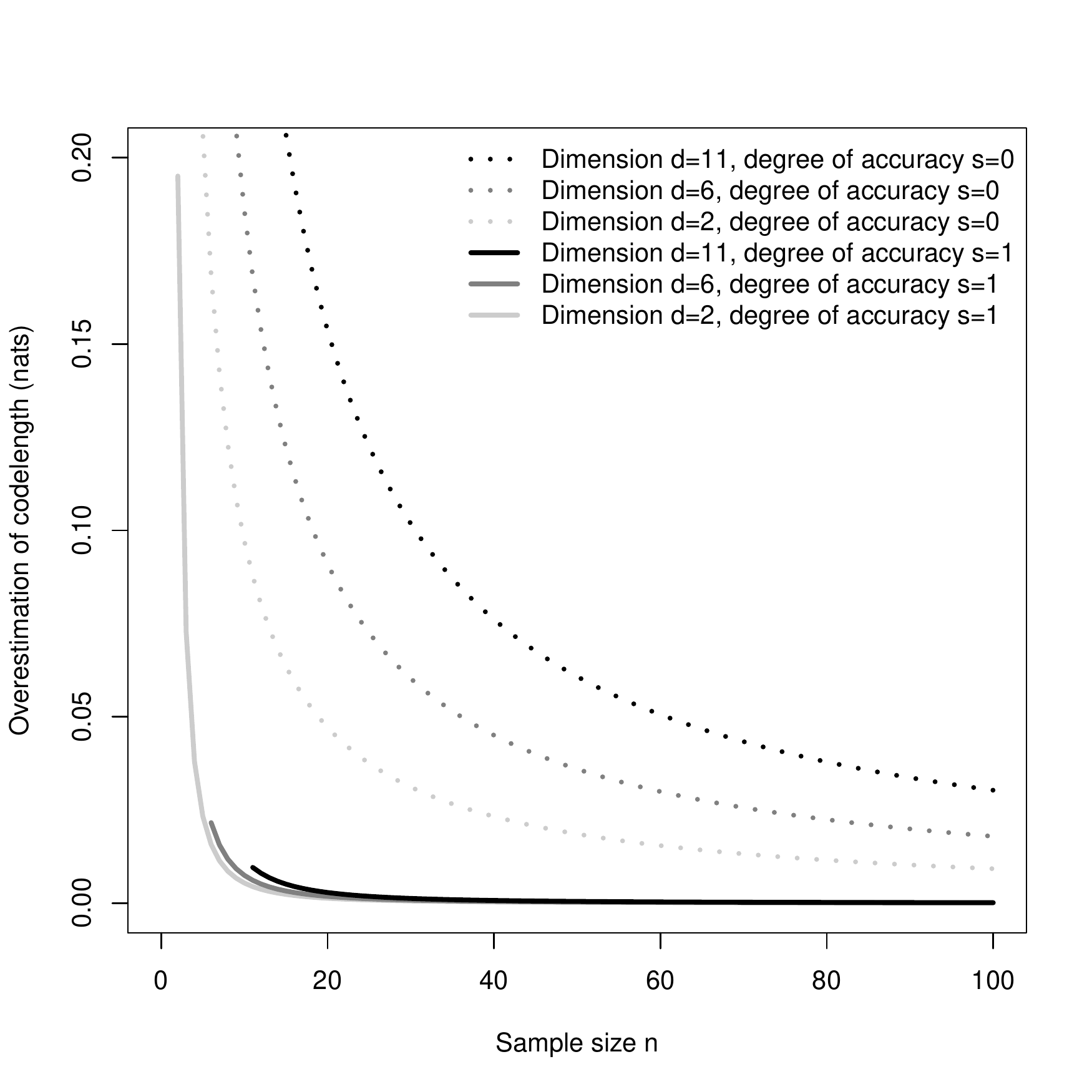} \\
\caption{The amount of overestimation of $\comp(K)$, and hence NML codelength, by both the standard approximation (\ref{E:PC_volapprox}), corresponding to $s=0$ in (\ref{E:PC_fullapprox}), and its first-order correction, corresponding to $s=1$ in (\ref{E:PC_fullapprox}), as a function of sample size $n$ and model dimension $d$, for the spherical normal model with unknown variance (see Section \ref{S:finite_sample}).
}
\label{F:vs_n}
\end{figure}

\section{The integrand on a general parameter space}
\label{S:AC}

In this section, we give an expression for the integrand $H_s(f(\theta),f(\theta))$ of (\ref{E:PC_theta}) which is valid in all parameter spaces.  The expression for $H_s(f(\theta),f(\theta))$ in Theorem \ref{T:Edgeworth_fn} is convenient for calculations, but it is not invariant under co-ordinate changes because cumulants are not tensors on $\Theta$.  The cumulant for a distribution is independent of the family to which it belongs (cumulants are derivatives of $K(t)$ with respect to the dummy variable $t$) so the cumulants $\kappa_{i_1 \dots i_r}$ transform like functions rather than the components of a tensor.  A co-ordinate-dependent expression like the one in Theorem \ref{T:Edgeworth_fn} is likely to be arbitrary in some ways, and therefore likely to hide the true mathematical structure, so in this section we give an expression for $F_1(\theta)$ which holds for all parameterisations.

For any $r=1, 2, 3, \dots$, define the $r^{th}$ Amari-Chentsov tensor $\ac{r}$ to be the symmetric tensor on $\Theta$ with components
\begin{align}
\label{E:ACdefn}
\ac{r}_{i_1, \dots, i_r} = \int \left( \prod_{j=1}^r \frac{\partial \log p_1(x, \theta) }{\partial \theta_{i_j}}  \right)  p_1(x, \theta) \; d\mu_1(x)
\end{align}
where $p_1(x, \theta)$ is the likelihood function at parameter $\theta$ for one observation $x$.  This is a tensor on any parameter space because it has a co-ordinate-independent definition (essentially just by replacing the partial differentials with arbitrary first-order partial differential operators, which are interpreted as vector fields in a way which is standard in differential geometry).

It is easy to show that the first Amari-Chentsov tensor $\ac{1}$ is trivial, $\ac{1} = 0$, but the second one is the Fisher information metric, $\ac{2} = g_\Theta$, so the Amari-Chentsov tensors can be seen as generalisations of the Fisher information metric tensor.  The third Amari-Chentsov tensor $\ac{3}$ is also interesting, because the difference between any two of Amari's $\alpha$-connections is a multiple of $\ac{3}$ \citep[\S 2.3]{AmariNagaoka00}.

In the natural parameterisation of an exponential family, the Amari-Chentsov tensors can be calculated using the following recurrence relation.

\begin{lemma}
\label{L:sk}
If $\Theta$ is the natural parameterisation of the exponential family and $r \ge 3$ is an integer then
$$\ac{r}_{i_1, \dots, i_r} = \frac{\partial }{\partial \theta_{i_r}} \ac{r-1}_{i_1, \dots, i_{r-1}}
+ \sum_{j=1}^{r-1} g_{i_j i_r} \ac{r-2}_{i_1, \dots,\hat{i}_j,\dots, i_{r-1}}$$
where the hat indicates that the term $i_j$ should be left out.
\end{lemma}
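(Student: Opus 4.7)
The plan is to differentiate $\ac{r-1}_{i_1,\ldots,i_{r-1}}$ with respect to $\theta_{i_r}$ under the integral sign and then identify the two resulting pieces as $\ac{r}$ and a sum of $g \cdot \ac{r-2}$ terms. The single essential feature of the natural parameterization that I would exploit is that the score function is affine in $x$: since $\log p_1(x,\theta) = x\cdot\theta - \psi(\theta)$, the score
$$\ell_i(x,\theta) \defeq \frac{\partial \log p_1(x,\theta)}{\partial \theta_i} = x_i - \frac{\partial \psi}{\partial \theta_i}(\theta)$$
satisfies
$$\frac{\partial \ell_i}{\partial \theta_j} = -\frac{\partial^2 \psi}{\partial \theta_i \partial \theta_j}(\theta) = -g_{ij}(\theta),$$
where the last equality uses that on the natural parameter space $g_\Theta$ is the Hessian of $\psi$ (clear from (\ref{E:FI_defn}), or \citep[Theorem 2.2.5]{KassVos97}). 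The decisive point is that $\partial_{i_r}\ell_{i_j}$ is a function of $\theta$ alone, independent of $x$, so it factors out of any expectation.

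Next I would write
$$\ac{r-1}_{i_1,\ldots,i_{r-1}} = \int \Bigl(\prod_{j=1}^{r-1} \ell_{i_j}(x,\theta)\Bigr) p_1(x,\theta)\, d\mu_1(x)$$
and apply $\partial/\partial \theta_{i_r}$, moving it inside the integral. The Leibniz rule splits the derivative into two parts. When $\partial/\partial\theta_{i_r}$ lands on $p_1$, the identity $\partial p_1/\partial \theta_{i_r} = \ell_{i_r} p_1$ produces $\int \bigl(\prod_{j=1}^{r}\ell_{i_j}\bigr) p_1\, d\mu_1 = \ac{r}_{i_1,\ldots,i_r}$. When it lands on the product of scores, a further Leibniz expansion together with $\partial_{i_r}\ell_{i_j} = -g_{i_j i_r}$ gives
$$-\sum_{j=1}^{r-1} g_{i_j i_r}(\theta) \int \Bigl(\prod_{k\neq j,\, k\le r-1} \ell_{i_k}\Bigr) p_1\, d\mu_1 = -\sum_{j=1}^{r-1} g_{i_j i_r}\, \ac{r-2}_{i_1,\ldots,\hat{i}_j,\ldots,i_{r-1}},$$
the constants $g_{i_j i_r}$ having come outside the integral by the observation of the first paragraph. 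Rearranging the resulting identity to solve for $\ac{r}_{i_1,\ldots,i_r}$ yields the stated recurrence; the condition $r\ge 3$ simply ensures $r-2\ge 1$ so that the right-hand $\ac{r-2}$ tensors are covered by definition (\ref{E:ACdefn}).

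The only real obstacle is justifying differentiation under the integral, and this is standard for exponential families: on the interior of the natural parameter space, $p_1(x,\theta)$ is smooth in $\theta$ and all polynomial moments of $x$ (and hence all polynomials in the scores $\ell_{i_j}$) against $p_1 d\mu_1$ are finite and depend smoothly on $\theta$, so dominated convergence legitimizes the interchange. Everything else is algebraic bookkeeping with the product rule.
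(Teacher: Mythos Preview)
Your proof is correct and is essentially the same as the paper's: both hinge on differentiating under the integral sign, applying the product rule, and using that in the natural parameterisation $\partial_{i_r}\ell_{i_j} = -\partial_{i_j}\partial_{i_r}\psi = -g_{i_j i_r}$ is independent of $x$ and hence factors out of the expectation. The only cosmetic difference is that the paper starts from the definition of $\ac{r}$ and rewrites $(\partial_{i_r}\log p_1)\,p_1$ as $\partial_{i_r}p_1$ before invoking the product rule, whereas you start from $\partial_{i_r}\ac{r-1}$ and identify the pieces; the computations are identical once unwound.
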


\begin{proof}
See Section \ref{S:proof_L:sk}.
\end{proof}

So in the natural parameterisation, using $\ac{1}_i=0$, $\ac{2}_{ij} = g_{ij} = \partial^2 \psi/\partial \theta_i\partial\theta_j$ and Lemma \ref{L:sk}, we have
\begin{align}
\label{E:AC3}
\ac{3}_{i_1 i_2 i_3} = \frac{\partial^3 \psi}{\partial \theta_{i_1} \partial\theta_{i_2} \partial\theta_{i_3}},
\end{align}
so another application of Lemma \ref{L:sk} implies
\begin{align}
\label{E:AC4}
\ac{4}_{i_1 \dots i_4} = \frac{\partial^4 \psi}{\partial \theta_{i_1} \dots \partial \theta_{i_4}}
+ g_{i_1 i_2}g_{i_3 i_4} + g_{i_1 i_3}g_{i_2 i_4} + g_{i_1 i_4}g_{i_2 i_3}.
\end{align}
The above derivatives of $\psi$ can be identified with the cumulants, by (\ref{E:cumulant}), so combining (\ref{E:AC3}) and (\ref{E:AC4}) with the expression for $F_1(\theta)$ in (\ref{E:S1}) gives
\begin{align}
\label{E:S1_invar}
F_1(\theta)
=& \frac{1}{8} \sum_{i_1, \dots, i_4}  \ac{4}_{i_1 \dots i_4} g^{i_1i_2}g^{i_3i_4}
 - \frac{1}{8} \sum_{i_1, \dots, i_6}  \ac{3}_{i_1 i_2 i_3} \ac{3}_{i_4 i_5 i_6} g^{i_1i_2}g^{i_3i_4}g^{i_5i_6} \nonumber  \\
& -\frac{1}{12} \sum_{i_1, \dots, i_6}  \ac{3}_{i_1 i_2 i_3} \ac{3}_{i_4 i_5 i_6} g^{i_1i_4}g^{i_2i_5}g^{i_3i_6} - \frac{d^2 + 2d}{8}
\end{align}
in the natural parameterisation, where the last term arises from contractions between the metric and its inverse, for example
$$\sum_{i_1, \dots, i_4} g_{i_1 i_2}g_{i_3 i_4} g^{i_1i_2}g^{i_3i_4}= d^2.$$
But since the left- and right-hand sides of (\ref{E:S1_invar}) are equivariant (i.e., transform correctly) under reparameterisations, and since we have just shown that (\ref{E:S1_invar}) is valid for the natural parameterisation, (\ref{E:S1_invar}) must hold for all parameterisations.

Using similar methods, an expression can also be given for $F_2(\theta)$, and higher terms, which is invariant under co-ordinate changes on the parameter space.

\section{Summary}
\label{S:conclusion}

We have given an asymptotic expansion for the contribution $\comp(K)$ to the parametric complexity coming from a given set $K$ in the parameter space, in the case of exponential families and IID data.  Our methods allow this expansion to be calculated to any desired degree of accuracy, and we gave explicit formulae for the first three terms.  The first term in this expansion is Rissanen's $o(1)$ approximation (\ref{E:PC_volapprox}), so our proof specialises to the first demonstration that Rissanen's result is actually $O(n^{-1})$.
Our proof is also arguably simpler than Rissanen's, since we do not have to partition $K$ into optimally sized pieces and since the theory of Edgeworth expansions largely takes care of effects at the boundary of $K$ for us.  We calculated the first-order correction to Rissanen's approximation for some examples, and showed that the resulting expression has improved finite-sample behaviour.  We also discussed the meaning of the higher-order terms, and we gave an expression for them which is valid in all parameterisations, in addition to giving a simpler expression in terms of cumulants which is valid for the natural parameterisation.

\appendix

\section{Proofs}
\label{S:proofs}

\subsection{Proof of Theorem \ref{T:PC}}
\label{S:proof_T:PC}

Under the reparameterisation map $f: \Theta \to \Xi$, the measure $\pi_\Theta$ corresponds to $\pi_\Xi$ and the function $\theta \mapsto H_s(f(\theta),f(\theta))$ corresponds to $\xi \mapsto H_s(\xi,\xi)$, so
\begin{align*}
\int_K H_s(f(\theta),f(\theta)) \; d\pi_\Theta(\theta) = \int_{f(K)} H_s(\xi,\xi) \; d\pi_\Xi(\xi).
\end{align*}
Also, if we interpret $K$ as a subset of $\Theta$ and $f(K)$ as a subset of $\Xi$, then
\begin{align*}
\comp(K) = \comp(f(K))
\end{align*}
by the definition (\ref{E:pc_defn_K}), since $\hxi = f \circ \htheta$ so the set of data points $x^n$ with $\htheta(x^n) \in K$ is the same as the set of $x^n$ with $\hxi(x^n) \in f(K)$.
Therefore the general case of the theorem will follow from the special case where $\Theta = \Xi$ and $f$ is the identity map,
so we restrict to this case now.

The Riemannian metric $g_\Xi$ can be used in a standard way to define a distance $d_F(x,\xi)$ between any two points $x,\xi \in \Xi$, so that $d_F(x,\xi)$ is equal to the infimum of the lengths of all smooth paths joining $x$ and $\xi$.  Then given any $x \in \Xi$ and any $\delta > 0$, let
$$B(x,\delta) = \{ \xi \in \Xi \mid d_F(x,\xi) \le \delta \}$$
be the closed ball in $\Xi$ of radius $\delta$ centred at $x$.  We assume that $\delta$ is small enough that $B(x,\delta)$ is compact for every $x \in K$  (i.e., that $\delta$ is less than the distance from $K$ to the `boundary' of $\Xi$).

Let $v_\delta$ be the volume of a Euclidean ball of radius $\delta$.  We will prove the theorem in the following three steps:
\begin{align*}
\exp \left( \comp(K) \right)
&= \int_K \int_{B(x,\delta)} v_\delta^{-1} q_n(x,\xi) \; d\pi_\Xi(\xi) d\nu_n(x) + O(\delta) \mbox{ Step 1}\\
&= \int_K \int_{B(x,\delta)} v_\delta^{-1} H_s(x,\xi) \phixi(x) d\pi_\Xi(\xi) d\lambda(x)
+ O(\delta) + o(n^{-s}) \mbox{ Step 2}\\
&= \left(\frac{n}{2\pi}\right)^{d/2} \int_K H_s(\xi,\xi) d\pi_\Xi(\xi) + O(\delta) + o(n^{-s}) \mbox{ Step 3}
\end{align*}
as $n \to \infty$ and $\delta \to 0$.  So taking the logarithm of both sides and setting $\delta = o(n^{-s})$ proves the theorem, since we will show that the error term $o(n^{-s})$ arising from Step 2 is uniform in $\delta$.

Steps 1 and 3 are similar, and are proved below by approximating the integrand to zeroth order in $\delta$ then integrating out the inner integral in the double integral (i.e., integrating in the $\xi$ direction).  Step 2 is obtained by swapping the order of integration then using Edgeworth expansions in the inner integral (since $\xi$ is fixed there).

{\em Step 1.}  Let
$$R = \{(x,\xi)\in \Xi \times \Xi \mid x \in K \mbox{ and } \xi \in B(x,\delta) \}$$
be the region of integration in the double integrals above.  Then for any $(x,\xi)$ in $R$, the zeroth order Taylor series expansion of $\xi \mapsto q_n(x,\xi)$ about $x$ shows that
\begin{align}
\label{E:lik_zeroth}
q_n(x,\xi) = q_n(x,x) + O(\delta),
\end{align}
where the error term $O(\delta)$ is uniformly bounded for all $(x,\xi) \in R$ because $q_n(x,\xi)$ is a smooth function and $R$ is compact.  Also, the Fisher information volume of the ball $B(x,\delta)$ can be approximated by the volume $v_\delta$ of a Euclidean ball of the same radius
\begin{align}
\label{E:vol_zeroth}
\int_{B(x,\delta)} d\pi_\Xi(\xi) = v_\delta + O(\delta)
\end{align}
by \citet{Gray73}, where the error term is uniformly bounded for all $x \in K$.  Therefore
\begin{align*}
& \int_K \int_{B(x,\delta)} v_\delta^{-1} q_n(x,\xi) \; d\pi_\Xi(\xi) d\nu_n(x)  \\
&= \int_K \int_{B(x,\delta)} v_\delta^{-1} q_n(x,x) \; d\pi_\Xi(\xi) d\nu_n(x) + O(\delta) \mbox{ by (\ref{E:lik_zeroth})} \\
&= \int_K v_\delta^{-1} q_n(x,x) \left( \int_{B(x,\delta)} d\pi_\Xi(\xi) \right) d\nu_n(x) + O(\delta) \\
&= \int_K q_n(x,x) d\nu_n(x) + O(\delta) \mbox{ by (\ref{E:vol_zeroth})}\\
&= \exp(\comp(K)) + O(\delta) \mbox{ by (\ref{E:PC_defn})}
\end{align*}
which proves Step 1.

{\em Step 2.}  Define $C(\xi,\delta)$ to be the truncated ball $K \cap B(\xi,\delta)$
and define $N(K,\delta)$
to be the closed $\delta$-neighbourhood
of $K$ in $\Xi$.
Then
\begin{align}
\label{E:deltaK_horiz}
x \in K \mbox{ and } \xi \in B(x,\delta) \Leftrightarrow \xi \in N(K,\delta) \mbox{ and } x \in C(\xi,\delta)
\end{align}
so the region of integration $R$ consists of all pairs $(x,\xi) \in \Xi \times \Xi$ satisfying either of these conditions.
Therefore
\begin{align*}
& \int_K \int_{B(x,\delta)} v_\delta^{-1} q_n(x,\xi) \; d\pi_\Xi(\xi) d\nu_n(x)\\
&=  \int_{N(K,\delta)} \int_{C(\xi,\delta)} v_\delta^{-1} q_n(x,\xi) \; d\nu_n(x) d\pi_\Xi(\xi) \mbox{ by (\ref{E:deltaK_horiz}) and Fubini's theorem} \\
&=  \int_{N(K,\delta)} \int_{C(\xi,\delta)} v_\delta^{-1} H_s(x,\xi) \phixi(x) \; d\lambda(x) d\pi_\Xi(\xi) + o(n^{-s}) \mbox{ by (\ref{E:edgeworth_firstdefn})} \\
&=  \int_K \int_{B(x,\delta)} v_\delta^{-1} H_s(x,\xi) \phixi(x) \; d\pi_\Xi(\xi) d\lambda(x)  + o(n^{-s}) \mbox{ by (\ref{E:deltaK_horiz}) and Fubini's theorem}
\end{align*}
so Step 2 is proved.  The fact that the error term $o(n^{-s})$ above is uniform in $\delta$ follows from \citep[Corollary 1.4]{BhattacharyaDenker90} with $\mathcal{A} = \{ K \cap B(\xi,\delta) \mid \xi \in \Xi \mbox{ and } \delta > 0 \}$, where $K$ is fixed.

{\em Step 3.}  For any $(x,\xi)$ in the region of integration $R$, the zeroth order Taylor series expansion of $\xi \mapsto H_s(x,\xi) \phixi(x)$ about $x$ shows that
\begin{align}
H_s(x,\xi)\phixi(x)
&= H_s(x,x)\phix(x)  + O(\delta) \nonumber \\
&= \left( \frac{n}{2\pi} \right)^{d/2} H_s(x,x) \sqrt{\det g_\Xi(x)} + O(\delta), \label{E:phixi_zeroth}
\end{align}
where the second equality follows by the definition (\ref{E:phixi}) of $\phixi$.  Here the error term $O(\delta)$ is uniformly bounded for $(x,\xi) \in R$, since $H_s(x,\xi) \phixi(x)$ is a smooth function and $R$ is compact.
Therefore
\begin{align*}
& \int_K \int_{B(x,\delta)} v_\delta^{-1} H_s(x,\xi) \phixi(x) \; d\pi_\Xi(\xi) d\lambda(x) \\
&= \left( \frac{n}{2\pi} \right)^{d/2} \int_K \int_{B(x,\delta)} v_\delta^{-1} H_s(x,x)\sqrt{\det g_\Xi(x)}   \; d\pi_\Xi(\xi) d\lambda(x) + O(\delta) \mbox{ by (\ref{E:phixi_zeroth})} \\
&= \left( \frac{n}{2\pi} \right)^{d/2}\int_K v_\delta^{-1} H_s(x,x)\sqrt{\det g_\Xi(x)} \left( \int_{B(x,\delta)} d\pi_\Xi(\xi) \right) d\lambda(x) + O(\delta) \\
&= \left( \frac{n}{2\pi} \right)^{d/2} \int_K H_s(x,x)\sqrt{\det g_\Xi(x)} d\lambda(x) + O(\delta) \mbox{ by (\ref{E:vol_zeroth})} \\
&= \left( \frac{n}{2\pi} \right)^{d/2} \int_K H_s(x,x)\; d\pi_\Xi(x) + O(\delta) \\
&= \left( \frac{n}{2\pi} \right)^{d/2} \int_K H_s(\xi,\xi)\; d\pi_\Xi(\xi) + O(\delta)
\end{align*}
since $\pi_\Xi = \lambda \sqrt{\det g_\Xi}$ by definition of the Jeffreys prior and since $x$ and $\xi$ are dummy variables in the last two integrals.  This proves Step 3 and hence the theorem.

\subsection{Proof of Theorem \ref{T:Edgeworth_fn}}
\label{S:proof_T:Edgeworth_fn}

Recall that $x_1, \dots, x_n$ are IID random variables governed by an unknown element of the exponential family $\cM$.  Suppose that the data-generating distribution has expectation parameter $\xi \in \Xi$, and consider this to be fixed throughout this section.  The maximum likelihood estimate $\hxi$ of $\xi$ is simply the sample mean $(x_1 + \dots + x_n)/n$, by (\ref{E:hxixbar}), and its mean and variance are $\E[\hat{\xi}] = \xi$ and $\var(\hat{\xi}) = (n g_\Xi(\xi))^{-1}$, by (\ref{E:meanvarxbar}).  Then the Edgeworth expansions from \citet{BarndorffNielsenCox79} or \citet[\S 4.5]{KassVos97} approximate the distribution $\zeta_n$ of the transformed mean $Z_n = \sqrt{n}(\hxi - \xi)$ by the asymptotic expansion
\begin{align}
\label{E:edgeworth_y}
\int_B d \zeta_n(z) = \int_B \widetilde{H}_s(z) \phi(z) d\lambda(z) + o(n^{-s})
\end{align}
as $n \to \infty$, where $\widetilde{H}_s(z)$ is an explicit polynomial in $z$, $\phi$ is the PDF for $N_d(0,g_\Xi(\xi)^{-1})$ and $B$ is any Borel set satisfying some mild regularity conditions.

To find $H_s(x,\xi)$ in terms of $\widetilde{H}_s(z)$, we simply have to calculate the effect of (the inverse of) the transformation $\tau: x \mapsto \sqrt{n}(x - \xi)$ on (\ref{E:edgeworth_y}).  The transformation $\tau$ takes $\hxi$ to $Z_n$, so $\tau^{-1}$ takes the distribution in the left-hand side of (\ref{E:edgeworth_y}) to the distribution $d q_n(x,\xi) d\nu_n(x)$ of $\hat{\xi}$.  On the other hand, the change of variables formula for PDFs shows that
if $W$ is any continuous random variable and $\rho$ is the PDF of $\tau(W)$ then the PDF for $W$ is $J_\tau \; \rho \circ \tau$,
where $J_\tau = n^{d/2}$ is the Jacobian determinant of $\tau$.  Therefore $\tau^{-1}$ takes the distribution in the right-hand side of (\ref{E:edgeworth_y}) to $\widetilde{H}_s(\sqrt{n}(x - \xi)) \phixi(x) d\lambda(x)$, so (\ref{E:edgeworth_y}) transforms under $\tau^{-1}$ to
$$ \int_B d q_n(x,\xi) d\nu_n(x) = \int_B \widetilde{H}_s(\sqrt{n}(x - \xi)) \phixi(x) d\lambda(z) + o(n^{-s}). $$
Comparing this equation to (\ref{E:edgeworth_firstdefn}) gives $H_s(x,\xi) = \widetilde{H}_s(\sqrt{n}(x- \xi))$, so
\begin{align}
\label{E:conversion_0}
H_s(\xi,\xi) = \widetilde{H}_s(0).
\end{align}

Now, the function $\widetilde{H}_s(z)$ of \citep{BarndorffNielsenCox79} or \citep[\S 4.5]{KassVos97} is a sum of terms involving integral and half-integral powers of $n$, but all of the terms involving half-integral powers vanish at $z=0$ by the theorem in the Appendix of \citep{BarndorffNielsenCox79}.  So (\ref{E:conversion_0}) implies that
$$H_s(f(\theta),f(\theta)) = \sum_{i=0}^s  F_i(\theta) n^{-i}$$
for integral $s$, as claimed.  The explicit formulae of \citep[\S 4.5]{KassVos97} then show that $F_0(\theta)=1$ and
\begin{align}
\label{E:conversion_S1}
F_1(\theta) = \frac{1}{24} \sum_{i_1, \dots, i_4}  \kappa_{i_1 \dots i_4} h_{i_1 \dots i_4}  +
\frac{1}{72} \sum_{i_1, \dots, i_6}  \kappa_{i_1 i_2 i_3} \kappa_{i_4 i_5 i_6} h_{i_1 \dots i_6}.
\end{align}
The methods of \citep[\S 4.5]{KassVos97} allow any $F_i(\theta)$ to be calculated, for example see Appendix \ref{S:appendixB}, where the formula (\ref{E:S2}) for $F_2(\theta)$ is calculated explicitly.  Therefore the theorem will be proved if we can convert (\ref{E:conversion_S1}) to the formula for $F_1(\theta)$ which appears in the statement of the theorem.

From \citep[\S 4.5]{KassVos97} or directly from the definition (\ref{E:hermite_defn}), the first few Hermite numbers are $h_{i_1 i_2} = -g^{i_1i_2}$,
\begin{align}
\label{E:conversion_h4}
h_{i_1 \dots i_4} = g^{i_1i_2}g^{i_3i_4} + g^{i_1i_3}g^{i_2i_4} + g^{i_1i_4}g^{i_2i_3}
\end{align}
and
\begin{align}
\label{E:conversion_h6}
h_{i_1 \dots i_6} = -(&
g^{i_1i_2}g^{i_3i_4}g^{i_5i_6} + g^{i_1i_3}g^{i_2i_4}g^{i_5i_6} + g^{i_1i_4}g^{i_2i_3}g^{i_5i_6} + g^{i_1i_2}g^{i_3i_5}g^{i_4i_6} + \nonumber \\
&g^{i_1i_3}g^{i_2i_5}g^{i_4i_6} + g^{i_1i_5}g^{i_2i_3}g^{i_4i_6} + g^{i_1i_2}g^{i_3i_6}g^{i_4i_5} + g^{i_1i_3}g^{i_2i_6}g^{i_4i_5} + \nonumber \\
&g^{i_1i_6}g^{i_2i_3}g^{i_4i_5} + g^{i_1i_4}g^{i_2i_5}g^{i_3i_6} + g^{i_1i_5}g^{i_2i_4}g^{i_3i_6} + g^{i_1i_4}g^{i_2i_6}g^{i_3i_5} + \nonumber \\
&g^{i_1i_6}g^{i_2i_4}g^{i_3i_5} + g^{i_1i_5}g^{i_2i_6}g^{i_3i_4} + g^{i_1i_6}g^{i_2i_5}g^{i_3i_4} ).
\end{align}
Substituting these into (\ref{E:conversion_S1}) and using the fact that the cumulants $\kappa_{i_1 \dots i_r}$ are invariant under all permutations of $i_1, \dots, i_r$, by (\ref{E:cumulant}), then
completes the proof.

\subsection{Proof of Lemma \ref{L:sk}}
\label{S:proof_L:sk}

Let $\Theta$ be the natural parameter space, so $\log p_1(x, \theta) = x \cdot \theta - \psi(\theta)$, and let $\partial_i = \partial/\partial\theta_i$.  Then
\begin{align}
\label{E:gexpfam}
g_{ij} = \partial_i \partial_j \psi  = -\partial_i \partial_j \log p_1,
\end{align}
by \citep[Theorem 2.2.5]{KassVos97}.  In particular, this implies that the right-hand side of (\ref{E:gexpfam}) is constant in $x$.  Then
\begin{align*}
\ac{r}_{i_1, \dots, i_r}
&= \int \left( \prod_{j=1}^r \partial_{i_j} \log p_1   \right)  p_1 \; d\mu_1(x) \mbox{ by (\ref{E:ACdefn})} \\
&= \int \left( \prod_{j=1}^{r-1} \partial_{i_j} \log p_1   \right)  \partial_{i_r} p_1 \; d\mu_1(x)\\
&= \partial_{i_r} \int \left( \prod_{j=1}^{r-1} \partial_{i_j} \log p_1   \right)  p_1 \; d\mu_1(x)
- \int \partial_{i_r} \left( \prod_{j=1}^{r-1} \partial_{i_j} \log p_1   \right)  p_1 \; d\mu_1(x) \\
&= \partial_{i_r} \ac{r-1}_{i_1, \dots, i_{r-1}}
- \int  \sum_{k=1}^{r-1} \left(  (\partial_{i_r} \partial_{i_k} \log p_1) \prod_{j\not=k} \partial_{i_j} \log p_1   \right)  p_1 \; d\mu_1(x) \\
&= \partial_{i_r} \ac{r-1}_{i_1, \dots, i_{r-1}}
+ \sum_{k=1}^{r-1} g_{i_r i_k} \int  \left( \prod_{j\not=k} \partial_{i_j} \log p_1   \right)  p_1 \; d\mu_1(x)
 \mbox{ by (\ref{E:gexpfam})} \\
&= \partial_{i_r} \ac{r-1}_{i_1, \dots, i_{r-1}}
+ \sum_{k=1}^{r-1} g_{i_r i_k} \ac{r-2}_{i_1, \dots,\hat{i}_k,\dots, i_{r-1}}  \mbox{ by (\ref{E:ACdefn}).}
\end{align*}

\section{Edgeworth expansions and Hermite numbers}
\label{S:appendixB}

This section gives a fairly self-contained description of multivariate Edgeworth expansions and the Hermite numbers for random variables with general (rather than identity) variance-covariance matrices.  We use the rigorous error analysis of \citet{BhattacharyaDenker90} and we closely follow the approach of \citet[\S4.5]{KassVos97}, except that we use multi-index notation which is more convenient when describing objects specified by large numbers of indices (such as high-degree partial derivatives of a function).  This section is an exposition of known results only, though its slight reformulation of Edgeworth expansions and its explicit formulae for higher order terms are useful for the rest of this paper, and do not appear in standard references.  Our main results are also reformulated in standard tensor notation at the end of this section.

Let $Y$ and $Y_1, Y_2, Y_3, \dots$ be IID $d$-dimensional random variables with mean $0$ and variance-covariance matrix $\Sigma$.  We assume that the distribution of $Y$ belongs to an exponential family of distributions, and therefore has a moment generating function, but we consider the distribution of $Y$ to be fixed for this entire section.  We aim to calculate the Edgeworth expansion for the distribution $\zeta_n$ of $Z_n \defeq (Y_1 + \dots + Y_n)/\sqrt{n}$ as $n \to \infty$ in terms of the cumulants of $Y$.

Let $M_Y$ and $K_Y$ be (respectively) the moment and cumulant generating functions of $Y$, given by
$ M_Y(t) \defeq \E [e^{t\cdot Y}]$ and $K_Y(t) = \log M_Y(t)$ for any $t \in \R^d$, where the dot represents the Euclidean inner product on $\R^d$.  Any $d$-tuple $\alpha = (\alpha_1, \dots, \alpha_d)$ of non-negative integers is called a multi-index, and we define $|\alpha| \defeq \alpha_1 + \dots + \alpha_d$,
$$\partial_t^\alpha = \frac{\partial^{|\alpha|}}{\partial t_1^{\alpha_1} \dots \partial t_d^{\alpha_d}},$$
$\alpha ! = \alpha_1 ! \dots \alpha_d !$ and $t^\alpha = t_1^{\alpha_1} \dots t_d^{\alpha_d}$ for any $t \in \R^d$.
Then the $\alpha^{th}$ cumulant $\kappa_\alpha$ of $Y$ is defined to be
\begin{align}
\label{E:defnkappa}
\kappa_\alpha = \left.(\partial_t^\alpha K_Y)\right|_{t=0}.
\end{align}
If $e_1, \dots, e_d$ is the standard basis for $\R^d$ then each $e_i$ is a multi-index, and a direct calculation from the definition shows that $\kappa_{e_i} = \E[Y^{(i)}]$ where $Y^{(i)}$ is the $i^{th}$ component of $Y = (Y^{(1)}, \dots, Y^{(d)})$.  We also note for future reference that $\kappa_{e_i+e_j} = \Sigma_{ij}$ (directly from the definition), so the degree-$2$ cumulants of $Y$ essentially give the covariance matrix of $Y$.  Then by (\ref{E:defnkappa}), the Taylor series expansion
for $K_Y$ about $0$ can be written as
\begin{equation}\label{E:CGF_Taylor}
K_Y(t) = \sum_{|\alpha| \ge 2}  \kappa_\alpha t^\alpha / \alpha!
\end{equation}
where the sum is over all multi-indices $\alpha$ with $|\alpha| \ge 2$, since $\kappa_\alpha=0$ when $|\alpha|$ is $0$ or $1$, because $K_Y(0)=0$ always and $\E [Y]=0$ by assumption.

Let $\phi:\R^d \to \R$ be the PDF for the $d$-dimensional normal distribution $N_d(0,\Sigma)$ with the same mean and covariance matrix as $Y$, so
\begin{equation}\label{E:PDF_normal}
\phi(y) = (2\pi)^{-d/2} (\det \Sigma)^{-1/2} \exp(-y^T \Sigma^{-1} y/2)
\end{equation}
for any $y \in \R^d$, where $y^T$ is the transpose of $y$ (here thinking of $y$ as a column matrix).
Then define the Hermite polynomial $h_\alpha(y)$ corresponding to multi-index $\alpha$ by
\begin{equation}\label{E:defn_Hermite}
h_\alpha(y) \phi(y) = (-1)^{|\alpha|} \; \partial_y^\alpha \phi(y)
\end{equation}
where $\partial_y^\alpha$ is a partial differential operator with respect to $y$ which is defined similarly to $\partial_t^\alpha$.  Note that many authors take $\Sigma$ to be the identity when defining the Hermite polynomials, but $\Sigma$ will be general throughout this section.

Let $\mathcal{A}$ be a class of Borel subsets of $\R^d$ which satisfies the condition in \citep[Corollary 1.4]{BhattacharyaDenker90}, e.g. $\mathcal{A}$ could consist of all convex Borel sets \citep[Remark 1.4.1]{BhattacharyaDenker90}.  Let $\lambda$ be the Lebesgue measure on $\R^d$.

\begin{theorem}
\label{T:edgeworth}
Suppose the distribution of $Y$ satisfies Cram\'er's condition (\ref{E:condCramer}).  Then for any integral of half-integral $s \ge 0$, there exist polynomials $S_{i/2}(y)$ in $y$ for each $i = 0, \dots, 2s$ so that
$$ \int_B d\zeta_n(y) = \int_B \phi(y) \left[ \sum_{i=0}^{2s} S_{i/2}(y) n^{-i/2} \right]d\lambda(y) + o(n^{-s})$$
uniformly in all Borel sets $B \in \mathcal{A}$.  In particular,  $S_0(y) = 1$,
$$ S_1(y) = \sum_{|\alpha| = 4}  \frac{\kappa_\alpha h_\alpha(y)}{\alpha!}  +
\sum_{|\alpha| = |\beta| = 3}  \frac{\kappa_\alpha \kappa_\beta h_{\alpha + \beta}(y)}{2 \; \alpha! \beta!} $$
and
\begin{align*}
S_2(y) &= \sum_{|\alpha| = 6}  \frac{\kappa_\alpha h_\alpha(y)}{\alpha!}  +
\sum_{\substack{|\alpha|=3\\|\beta| = 5}}  \frac{\kappa_\alpha \kappa_\beta h_{\alpha + \beta}(y)}{\alpha! \beta!} +
\sum_{|\alpha| = |\beta| = 4}  \frac{\kappa_\alpha \kappa_\beta h_{\alpha + \beta}(y)}{2 \; \alpha! \beta!} \\
& + \sum_{\substack{|\alpha| = |\beta| = 3\\ |\gamma|=4}}  \frac{\kappa_\alpha \kappa_\beta \kappa_\gamma h_{\alpha + \beta+ \gamma}(y)}{2 \; \alpha! \beta! \gamma!} +
\sum_{\substack{|\alpha| = |\beta| = 3 \\ |\gamma|=|\delta|=3}}  \frac{\kappa_\alpha \kappa_\beta \kappa_\gamma \kappa_\delta h_{\alpha + \beta+ \gamma + \delta}(y)}{24 \; \alpha! \beta! \gamma! \delta!}.
\end{align*}
Also, if $i$ is odd then $S_{i/2}(y)$ is a linear combination of terms of the form $h_\alpha(y)$ where $|\alpha|$ is odd (which implies $S_{i/2}(y)=0$ in our main case of interest, $y=0$, as we will see in Theorem \ref{T:Hermite_numbers}, below).
\end{theorem}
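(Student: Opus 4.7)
The plan is to derive the expansion formally by expanding the moment generating function of $Z_n$ in powers of $n^{-1/2}$, inverting the resulting characteristic-function expansion term by term to pass to the density, and appealing to \citep[Corollary 1.4]{BhattacharyaDenker90} for the rigorous uniform error bound under Cram\'er's condition.

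First, since the $Y_j$ are IID, the cumulant generating function of $Z_n$ is $K_{Z_n}(t) = n K_Y(t/\sqrt n)$. Substituting (\ref{E:CGF_Taylor}) and using $\kappa_{e_i+e_j}=\Sigma_{ij}$ gives
\[
K_{Z_n}(t) = \tfrac{1}{2} t^T \Sigma t + R(t,n), \qquad R(t,n) = \sum_{|\alpha|\ge 3} \frac{\kappa_\alpha\, t^\alpha}{\alpha!}\, n^{1-|\alpha|/2}.
\]
Exponentiating and collecting the $n^{-i/2}$ contribution to $e^R$ yields
\[
\tilde S_{i/2}(t) = \sum_{j\ge 1} \frac{1}{j!} \sum \frac{\kappa_{\alpha_1}\cdots\kappa_{\alpha_j}\, t^{\alpha_1+\cdots+\alpha_j}}{\alpha_1!\cdots\alpha_j!},
\]
where the inner sum runs over ordered $j$-tuples of multi-indices with each $|\alpha_k|\ge 3$ and $\sum_k(|\alpha_k|-2)=i$. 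Enumerating the admissible partitions for $i=2$ (namely $j=1$ with $|\alpha|=4$, and $j=2$ with $|\alpha|=|\beta|=3$) and for $i=4$ (namely $j=1,\ldots,4$ with the partitions already recorded in the theorem) reproduces the $t$-side analogues of $S_1$ and $S_2$.

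Second, to obtain the density expansion I would write the characteristic function as $\phi_{Z_n}(t)=M_{Z_n}(it)=\hat\phi(t)\bigl(1+\sum_i \tilde S_{i/2}(it)\, n^{-i/2}\bigr)$ and invert term by term using the Fourier identity $\widehat{\partial_y^\alpha\phi}(t)=(it)^\alpha\hat\phi(t)$ combined with the Hermite definition (\ref{E:defn_Hermite}); together these send $(it)^\alpha\hat\phi(t)\mapsto(-1)^{|\alpha|}h_\alpha(y)\phi(y)$. The factor $(-1)^{\sum_k|\alpha_k|}$ arising this way equals $+1$ in the integer-$i$ case, because the constraint $\sum_k|\alpha_k|=i+2j$ forces this sum to be even, so each $\tilde S_{i/2}(t)$ converts directly to the density polynomial $S_{i/2}(y)$ in the statement.

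Third, the uniform-in-$B$ error bound $o(n^{-s})$ is precisely the content of \citep[Corollary 1.4]{BhattacharyaDenker90}, which applies because Cram\'er's condition (\ref{E:condCramer}) holds and $Y$ has moments of all orders (being in an exponential family); I would invoke this result directly, since its proof is the technical heart of rigorous Edgeworth theory and would take me far afield to reprove. The parity claim then follows from the same combinatorics: when $i$ is odd, $\sum_k|\alpha_k|=i+2j$ is odd, so every Hermite polynomial $h_{\alpha_1+\cdots+\alpha_j}$ appearing in $S_{i/2}(y)$ has odd total order. The main obstacle is the tedious but routine bookkeeping of multi-index partitions, especially those contributing to $S_2$, where one must carefully account for orderings when passing from the symmetric $j!$-grouped expansion of $e^R$ to the unordered presentation in the theorem; the analytic content of the result lies entirely in the cited Bhattacharya--Denker bound.
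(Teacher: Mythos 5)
Your proposal follows essentially the same route as the paper's proof: expand $K_{Z_n}(t)=nK_Y(t/\sqrt n)$ via the cumulant Taylor series, exponentiate and collect powers of $n^{-1/2}$, invert the resulting expansion term by term using the transform identity for $h_\alpha(y)\phi(y)$, and delegate the rigorous uniform $o(n^{-s})$ bound to Corollary 1.4 of Bhattacharya and Denker. The only cosmetic difference is that you invert the characteristic function and must track a sign $(-1)^{\sum_k|\alpha_k|}$ (which you correctly note is $+1$ whenever $i$ is an integer, since $\sum_k|\alpha_k|=i+2j$), whereas the paper works with the moment generating function, for which the transform of $h_\alpha(y)\phi(y)$ is exactly $t^\alpha e^{t^T\Sigma t/2}$ and no sign bookkeeping is needed.
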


\begin{proof}
As mentioned above, this proof closely follows the approach of \citep[\S4.5]{KassVos97} and relies on the rigorous error analysis of \citep{BhattacharyaDenker90}.

We first note that
\begin{align*}
K_{Z_n}(t)
&= n K_Y(t/\sqrt{n}) \mbox{ since $Y_1, \dots, Y_n$ are IID} \\
&= \sum_{|\alpha| \ge 2}  \kappa_\alpha t^\alpha n^{1-|\alpha|/2}/ \alpha! \mbox{ by (\ref{E:CGF_Taylor})} \\
&= \frac{1}{2} t^T \Sigma t + \sum_{|\alpha| \ge 3}  \kappa_\alpha t^\alpha n^{1-|\alpha|/2}/ \alpha! \mbox{ since $\kappa_{e_i+e_j} = \Sigma_{ij}$} \\
&= \frac{1}{2} t^T \Sigma t + \sum_{r=3}^\infty T_r n^{1-r/2}
\end{align*}
where $T_r \defeq \sum_{|\alpha| = r}  \kappa_\alpha t^\alpha / \alpha!$ is the sum of all terms of degree $r$ in $t$.  So
\begin{align}
&M_{Z_n}(t)&& \nonumber \\
&= e^{t^T \Sigma t/2}&&\exp\left(\sum_{r=3}^\infty T_r n^{1-r/2} \right) \nonumber \\
&= e^{t^T \Sigma t/2}&&\left( 1 + \left[\sum_{r=3}^\infty T_r n^{1-r/2}\right]  + \frac{1}{2!} \left[\sum_{r=3}^\infty T_r n^{1-r/2}\right]^2  +
\frac{1}{3!} \left[\sum_{r=3}^\infty T_r n^{1-r/2}\right]^3
+ \dots  \right) \nonumber \\
&= e^{t^T \Sigma t/2}&&\left( 1 + n^{-1/2}\left[ T_3 \right] + n^{-1}\left[ T_4 + T_3^2/2 \right]
+ n^{-3/2}\left[ T_5 + T_3 T_4 + T_3^3/6 \right] \right. \nonumber \\
&&&+ n^{-2}\left[ T_6 + T_3 T_5 + T_4^2/2 + T_3^2 T_4 /2 + T_3^4/24 \right] \nonumber \\
&&&+ n^{-5/2}\left[ T_7 + T_3 T_6 + T_4 T_5 + T_3^2 T_5/2 + T_3 T_4^2/2 + T_3^3 T_4/6 + T_3^5/120 \right] \nonumber \\
&&& \left. + O(n^{-3})  \right) \label{E:MGF_approx}
\end{align}
where the second step uses the Taylor series expansion for the exponential function about $0$ (note that the argument is $O(1/\sqrt{n})$ so any desired accuracy can be obtained by truncating this series appropriately) and the third step simply collects together terms with the same power of $n$ (which is easy to do using a program which performs symbolic formula manipulation, such as Maxima \citep{maxima5.31.2}).

Now, $M_{Z_n}(t)$ is the integral transform
$$M_{Z_n}(t) = \E [e^{t\cdot {Z_n}}] = \int_{\R^d} e^{t\cdot y} \; d \zeta_n(y) $$
of the distribution $\zeta_n$ which are trying to approximate,
so we need to invert this transform.  But (\ref{E:MGF_approx}) approximates $M_{Z_n}(t)$ by a linear combination of terms which are each of the form $t^\alpha e^{t^T \Sigma t/2}$, which we will shortly show is the transform of $h_\alpha(y) \phi(y)$.
Using this property of the Hermite polynomials to invert (\ref{E:MGF_approx}) term-by-term then gives the Edgeworth expansion for $\zeta_n$.
For example, the $n^{-1}$ term in (\ref{E:MGF_approx}) is
$$ n^{-1} e^{t^T \Sigma t/2}\left[ T_4 + T_3^2/2 \right] =  n^{-1} e^{t^T \Sigma t/2} \left[
\sum_{|\alpha| = 4}  \frac{\kappa_\alpha t^\alpha}{\alpha!}  +
\sum_{|\alpha| = |\beta| = 3}  \frac{\kappa_\alpha \kappa_\beta t^{\alpha + \beta}}{2 \; \alpha! \beta!} \right] $$
whose inverse transform is therefore
$n^{-1} \phi(y) S_1(y)$, where $S_1(y)$ is given in the statement of the theorem.
The rigorous error analysis of \citep[Corollary 1.4]{BhattacharyaDenker90}
then proves the theorem.  So we finish by showing that the transform of $h_\alpha(y) \phi(y)$ is $t^\alpha e^{t^T \Sigma t/2}$:
\begin{align*}
&\int_{\R^d} e^{t\cdot y} h_\alpha(y) \phi(y) \; d\lambda(y) \nonumber \\
&= \int_{\R^d} e^{t\cdot y} (-1)^{|\alpha|} (\partial_y^\alpha \phi) \; d\lambda(y) \mbox{ by the definition (\ref{E:defn_Hermite}) of $h_\alpha(y)$} \\
&= \int_{\R^d} (\partial_y^\alpha e^{t\cdot y}) \phi(y) \; d\lambda(y) \mbox{ by integration by parts} \\
&= t^\alpha \int_{\R^d} e^{t\cdot y} \phi(y) \; d\lambda(y) \\
&= t^\alpha \int_{\R^d} (2\pi)^{-d/2} (\det \Sigma)^{-1/2} \exp( t\cdot y  - y^T \Sigma^{-1} y/2) \; d\lambda(y) \mbox{ by (\ref{E:PDF_normal})}\\
&= t^\alpha  \exp(t^T \Sigma t/2) \int_{\R^d} (2\pi)^{-d/2} (\det \Sigma)^{-1/2} \exp(-(y-\Sigma t)^T \Sigma^{-1} (y-\Sigma t)/2) \; d\lambda(y) \\
&= t^\alpha  e^{t^T \Sigma t/2},
\end{align*}
where the last step uses the fact that the integrand is the PDF for the normal distribution $N_d(\Sigma t, \Sigma)$ whose integral is $1$.
\end{proof}

When applying Theorem \ref{T:edgeworth} in the rest of this paper, our main case of interest will be when $y=0$.  We therefore now calculate the numbers $h_\alpha(0)$, which are known as the Hermite numbers.

\begin{theorem}
\label{T:Hermite_numbers}
The Hermite number $h_\alpha(0)$ is $0$ if $|\alpha|$ is odd and if $|\alpha|= 2k$ is even then
\begin{align}
h_\alpha(0)
&= \frac{(-1)^k}{k!} \; \partial_z^\alpha \left[\left(z^T \Sigma^{-1} z/2\right)^{k} \right]
\label{E:Hermite_numbers} \\
&= \frac{(-1)^k \alpha!}{k!} \sum_{\substack{|\beta^1| = \dots = |\beta^k|= 2 \\ \alpha = \beta^1+ \dots +\beta^k}}
\frac{\tilde{\kappa}_{\beta^1} \dots \tilde{\kappa}_{\beta^k}}{\beta^1! \dots \beta^k!},
\label{E:Hermite_numbers_pairs}
\end{align}
where the sum is over all ordered $k$-tuples $\beta^1, \dots, \beta^k$ of degree-$2$ multi-indices whose sum is $\alpha$, and  where $\tilde{\kappa}_{\beta^j} = \left( \Sigma^{-1} \right)_{ab}$ if $\beta^j = e_a + e_b$.
\end{theorem}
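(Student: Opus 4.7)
The plan is to work directly from the defining relation $h_\alpha(y)\phi(y) = (-1)^{|\alpha|}\partial_y^\alpha \phi(y)$ and exploit the factorization $\phi(y) = \phi(0)\exp(-y^T\Sigma^{-1}y/2)$. Evaluating at $y = 0$ reduces the problem to computing a specific Taylor coefficient:
$$ h_\alpha(0) = (-1)^{|\alpha|}\left.\partial_y^\alpha\exp(-y^T\Sigma^{-1}y/2)\right|_{y=0}. $$

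I would next expand $\exp(-y^T\Sigma^{-1}y/2) = \sum_{j\ge 0}\frac{(-1)^j}{j!}(y^T\Sigma^{-1}y/2)^j$ and observe that the term of degree $2j$ contributes nothing to $\partial_y^\alpha(\,\cdot\,)|_{y=0}$ unless $2j = |\alpha|$: lower-degree polynomials are annihilated by an order-$|\alpha|$ derivative, while higher-degree polynomials still vanish at $y = 0$ after differentiation. So when $|\alpha|$ is odd no term matches and $h_\alpha(0) = 0$. When $|\alpha| = 2k$, only the single summand $\frac{(-1)^k}{k!}(y^T\Sigma^{-1}y/2)^k$ survives, and using $(-1)^{|\alpha|} = 1$ we obtain the first claimed equality (\ref{E:Hermite_numbers}) (with $z$ in place of $y$, a harmless change of dummy variable).

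For the second equality, I would first verify the quadratic-form identity $y^T\Sigma^{-1}y/2 = \sum_{|\beta|=2}\tilde{\kappa}_\beta\, y^\beta/\beta!$ by splitting $\sum_{a,b}(\Sigma^{-1})_{ab}y_ay_b$ into diagonal contributions ($\beta = 2e_a$, with $\beta! = 2$) and off-diagonal contributions ($\beta = e_a + e_b$ with $a<b$, with $\beta! = 1$), matching each against the definition $\tilde{\kappa}_{e_a+e_b} = (\Sigma^{-1})_{ab}$. The multinomial theorem then yields
$$ \left(y^T\Sigma^{-1}y/2\right)^k = \sum_{\beta^1,\dots,\beta^k\,:\,|\beta^j|=2} \prod_{j=1}^k\frac{\tilde{\kappa}_{\beta^j}}{\beta^j!}\; y^{\beta^1+\dots+\beta^k}, $$
and applying $\partial_y^\alpha|_{y=0}$ selects precisely those tuples satisfying $\beta^1+\dots+\beta^k = \alpha$, each surviving term contributing a factor of $\alpha!$ via $\partial_y^\alpha y^\alpha|_{y=0} = \alpha!$. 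Combining with the prefactor $(-1)^k/k!$ gives (\ref{E:Hermite_numbers_pairs}).

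The argument is analytically light; the only real obstacle is combinatorial bookkeeping, namely arranging the factorials $\beta^j!$ and $\alpha!$ correctly and avoiding double counting in the diagonal case $\beta = 2e_a$ when rewriting the quadratic form in multi-index notation. No property of $\Sigma^{-1}$ beyond symmetry is used.
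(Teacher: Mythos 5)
Your proof is correct, and for the first identity it takes a genuinely different and more elementary route than the paper. The paper never expands the exponential directly: instead it computes $(-1)^r\frac{d^r}{ds^r}\bigl[\phi(sz)\bigr]$ in two ways --- once via the one-dimensional unit-variance Hermite polynomials $\tilde{h}_r$ evaluated at $s\sqrt{z^T\Sigma^{-1}z}$, and once by an induction on $r$ that uses the chain rule together with the recurrence $h_{\alpha+e_i} = -\partial h_\alpha/\partial y_i + h_\alpha h_{e_i}$ --- then sets $s=0$, invokes the classical values $\tilde{h}_r(0)=0$ (odd $r$) and $(-1)^{r/2}(r-1)!!$ (even $r$), and extracts $h_\alpha(0)$ by applying $\partial_z^\alpha$ to the resulting polynomial identity. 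Your argument bypasses all of that: since $\phi(y)=\phi(0)\exp(-y^T\Sigma^{-1}y/2)$, the number $h_\alpha(0)=(-1)^{|\alpha|}\partial_y^\alpha\exp(-y^T\Sigma^{-1}y/2)\big|_{y=0}$ is just $\alpha!$ times a Taylor coefficient of an entire function, and only the degree-$|\alpha|$ term of the exponential series can contribute, which immediately gives both the vanishing for odd $|\alpha|$ and \eqref{E:Hermite_numbers} for $|\alpha|=2k$ (the double factorial that the paper must cancel against $(2k)!$ never appears). What your shortcut gives up is the intermediate identity $\langle z,z\rangle^{r/2}\tilde{h}_r(0)=\sum_{|\alpha|=r}\frac{|\alpha|!}{\alpha!}h_\alpha(sz)z^\alpha\big|_{s=0}$ relating the multivariate Hermite polynomials along a ray to the one-dimensional ones, which the paper establishes for all $s$ even though it only uses $s=0$; but for the theorem as stated nothing is lost. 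For the second identity your treatment (rewriting the quadratic form as $\sum_{|\beta|=2}\tilde{\kappa}_\beta y^\beta/\beta!$, expanding the $k$-th power over ordered tuples, and reading off the coefficient of $y^\alpha$) is exactly the step the paper leaves implicit, handled with the correct bookkeeping of $\beta!=2$ on the diagonal versus $\beta!=1$ off it.
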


\begin{proof}
Recall that $\phi:\R^d \to \R$ is the PDF of $N_d(0,\Sigma)$ given by (\ref{E:PDF_normal})
and that the Hermite polynomial $h_\alpha(y)$ is defined by (\ref{E:defn_Hermite}), i.e., $h_\alpha \phi = (-1)^{|\alpha|} (\partial_y^\alpha \phi)$.
Using (\ref{E:defn_Hermite}), we see that our version of the Hermite polynomials satisfy the following recurrence relation, where we recall that $e_1, \dots, e_d$ is the standard basis for $\R^d$:
\begin{align}
h_{\alpha + e_i}
&= \phi^{-1} (-1)^{|\alpha + e_i|} (\partial_y^{\alpha + e_i} \phi)
= -\phi^{-1} \frac{\partial \;}{\partial y_i}  \left(h_\alpha \phi \right)
= -\frac{\partial h_\alpha}{\partial y_i} + h_\alpha h_{e_i}, \label{E:Hermite_recurrence}
\end{align}
where $\phi^{-1} = 1/\phi$.  It is easy to show from (\ref{E:PDF_normal}) and (\ref{E:defn_Hermite}) that $h_0(y)=1$ and $h_{e_i}(y) = e_i^T \Sigma^{-1} y$, so the Hermite polynomials can be calculated explicitly from (\ref{E:Hermite_recurrence}), though we will do not pursue this here.

Now, let $z \in \R^d$ be fixed, and define $\langle z,z \rangle = z^T \Sigma^{-1} z$ to be the square of the Mahalanobis norm.
Then for $s \in \R$ and any non-negative integer $r$,
\begin{align}
(-1)^r \frac{d^r}{d s^r} \left[ \phi(sz) \right]
&= (2\pi)^{-d/2} (\det \Sigma)^{-1/2} (-1)^r \frac{d^r}{d s^r} \left[\exp(-s^2 \langle z,z \rangle/2)\right] \mbox{ by (\ref{E:PDF_normal})} \nonumber \\
&= \langle z,z \rangle^{r/2}  \phi(sz) \tilde{h}_r\left(s \sqrt{\langle z,z \rangle}\right) \label{E:Hermite_calc1}
\end{align}
where $\tilde{h}_r$ is the $r^{th}$ $1$-dimensional, unit variance Hermite polynomial, defined by
$$ \tilde{h}_r(w) \exp(-w^2/2) = (-1)^r \frac{d^r}{d w^r} \exp(-w^2/2).$$
On the other hand, we will shortly use the chain rule to show that
\begin{align}
(-1)^r \frac{d^r}{d s^r} \left[ \phi(sz) \right]
&= \phi(sz) \sum_{|\alpha| = r}  n_\alpha h_\alpha(sz) z^\alpha    \label{E:Hermite_calc2}
\end{align}
where $n_\alpha = |\alpha|!/\alpha!$.  Combining (\ref{E:Hermite_calc1}) and (\ref{E:Hermite_calc2}) and evaluating at $s=0$ will therefore give
\begin{align}
\langle z,z \rangle^{r/2} \tilde{h}_r(0) = \sum_{|\alpha| = r}  n_\alpha h_\alpha(0) z^\alpha.  \label{E:Hermite_calc}
\end{align}
But it is well known that $\tilde{h}_r(0)$ is $0$ if $r$ is odd and $(-1)^{r/2}(r-1)!!$ if $r$ is even, where
$(r-1)!! = 1\times 3\times 5 \times \dots \times (r-1)$
is the double factorial.
So differentiating both sides of (\ref{E:Hermite_calc}) by $\partial_z^\alpha$
gives $h_\alpha(0) = 0$ if $|\alpha|$ is odd and (\ref{E:Hermite_numbers}) if $|\alpha|$ is even.  Then (\ref{E:Hermite_numbers_pairs}) follows from (\ref{E:Hermite_numbers}) and
$$ \frac{z^T \Sigma^{-1} z}{2} = \sum_{|\beta| = 2} \frac{\tilde{\kappa}_\beta z^\beta}{\beta!}. $$

So we finish by proving (\ref{E:Hermite_calc2}) by induction.  This is clearly true when $r=0$ since $h_0(y) = 1$, so we now assume (\ref{E:Hermite_calc2}) is true for $r = k$ and prove it for $r=k+1$.
\begin{align*}
(-1)^{k+1} \frac{d^{k+1}}{d s^{k+1}} \left[ \phi(sz) \right]
&= - \frac{d}{d s} \left[ \phi(sz) \sum_{|\alpha| = k}   n_\alpha h_\alpha(sz) z^\alpha \right] \mbox{ by the induction hypothesis} \\
&= - \sum_{|\alpha| = k} \sum_{i=1}^d z_i \left.\frac{\partial}{\partial y_i}\right|_{y=sz} \left[ \phi(y) h_\alpha(y)\right] n_\alpha z^\alpha \mbox{ by the chain rule}\\
&= \phi(sz) \sum_{|\alpha| = k} \sum_{i=1}^d n_\alpha z^{\alpha + e_i} \left[ h_{e_i}(sz) h_\alpha(sz) - \left.\frac{\partial h_\alpha}{\partial y_i}\right|_{y=sz} \right] \mbox{ by (\ref{E:defn_Hermite})}\\
&= \phi(sz) \sum_{|\alpha| = k} \sum_{i=1}^d n_\alpha z^{\alpha + e_i} h_{\alpha + e_i}(sz) \mbox{ by (\ref{E:Hermite_recurrence})}\\
&= \phi(sz) \sum_{|\beta| = k+1} n_\beta  h_\beta(sz) z^\beta
\end{align*}
since, for any $\beta$ with $|\beta| = k+1$, we have
\begin{align*}
\sum_{i:\beta_i \ge 1}  n_{\beta - e_i}
= \sum_{i:\beta_i \ge 1}  \frac{(|\beta|-1)!}{\beta_1! \dots (\beta_i- 1)! \dots \beta_d!}
= \frac{(|\beta|-1)!}{\beta!} \sum_{i:\beta_i \ge 1}  \beta_i
= \frac{|\beta|!}{\beta!} = n_\beta.
\end{align*}
\end{proof}

Multi-index notation is generally more concise than tensor notation for expressions involving multi-indices $\alpha$ where $|\alpha|$ is large, but not when $|\alpha|$ is small.  Also, many people are uncomfortable with multi-index notation, despite its advantages, so we finish this section by converting some of the expressions above into tensor notation.
The key identity for this conversion is
\begin{align}
\label{E:conversion_appendix}
\sum_{|\alpha|=r}  \frac{f(\alpha)}{\alpha!} = \frac{1}{r!} \sum_{i_1, \dots, i_r} f(e_{i_1} + \dots + e_{i_r}),
\end{align}
where $f$ is any function of degree $r$ multi-indices.  This formula follows from
$$ \sum_{i_1, \dots, i_r} f(e_{i_1} + \dots + e_{i_r})
= \sum_{|\alpha|=r} \sum_{\substack{i_1, \dots, i_r \\ \alpha = e_{i_1} + \dots + e_{i_r}}} f(\alpha)
= \sum_{|\alpha|=r} f(\alpha) \sum_{\substack{i_1, \dots, i_r \\ \alpha = e_{i_1} + \dots + e_{i_r}}} 1
= \sum_{|\alpha|=r} \frac{f(\alpha) r!}{\alpha!},
$$
where the last step uses the fact that $r!/\alpha!$ is the multinomial coefficient which counts the number of ways of putting $r$ distinguishable objects into $d$ boxes so that $\alpha_i$ objects go into box $i$.

For any indices $i_1, \dots, i_r \in \{1, \dots, d\}$, let $\kappa_{i_1 \dots i_r}$ and $h_{i_1 \dots i_r}$ be alternative notation for $\kappa_\alpha$ and $h_\alpha(0)$, respectively, where $\alpha = e_{i_1} + \dots + e_{i_r}$.  Then (\ref{E:conversion_appendix}) implies that $S_1(y)$ and $S_2(y)$ in Theorem \ref{T:edgeworth} are given at $y=0$ by
\begin{align}
\label{E:conversion_S1_appendix}
S_1(0) = \frac{1}{24} \sum_{i_1, \dots, i_4}  \kappa_{i_1 \dots i_4} h_{i_1 \dots i_4}  +
\frac{1}{72} \sum_{i_1, \dots, i_6}  \kappa_{i_1 i_2 i_3} \kappa_{i_4 i_5 i_6} h_{i_1 \dots i_6}
\end{align}
and
\begin{align}
S_2(0) =& \frac{1}{720} \sum_{i_1, \dots, i_6}  \kappa_{i_1 \dots i_6} h_{i_1 \dots i_6}  +
\frac{1}{720} \sum_{i_1, \dots, i_8}  \kappa_{i_1 i_2 i_3} \kappa_{i_4 \dots i_8} h_{i_1 \dots i_8} + \nonumber \\
& \frac{1}{1152} \sum_{i_1, \dots, i_8}  \kappa_{i_1 \dots i_4} \kappa_{i_5 \dots i_8} h_{i_1 \dots i_8}
+ \frac{1}{1728} \sum_{i_1, \dots, i_{10}}  \kappa_{i_1 i_2 i_3} \kappa_{i_4 i_5 i_6} \kappa_{i_7 \dots i_{10}} h_{i_1 \dots i_{10}} + \nonumber  \\
& \frac{1}{31104} \sum_{i_1, \dots, i_{12}}  \kappa_{i_1 i_2 i_3} \kappa_{i_4 i_5 i_6} \kappa_{i_7 i_8 i_9} \kappa_{i_{10} i_{11} i_{12}} h_{i_1 \dots i_{12}}.  \label{E:conversion_S2_appendix}
\end{align}
Also, by Theorem \ref{T:Hermite_numbers} and (\ref{E:conversion_appendix}), we have
\begin{align}
\label{E:conversion_h4_appendix}
h_{i_1 \dots i_4} = g^{i_1i_2}g^{i_3i_4} + g^{i_1i_3}g^{i_2i_4} + g^{i_1i_4}g^{i_2i_3}
\end{align}
and
\begin{align}
\label{E:conversion_h6_appendix}
h_{i_1 \dots i_6} = -(&
g^{i_1i_2}g^{i_3i_4}g^{i_5i_6} + g^{i_1i_3}g^{i_2i_4}g^{i_5i_6} + g^{i_1i_4}g^{i_2i_3}g^{i_5i_6} + g^{i_1i_2}g^{i_3i_5}g^{i_4i_6} + \nonumber \\
&g^{i_1i_3}g^{i_2i_5}g^{i_4i_6} + g^{i_1i_5}g^{i_2i_3}g^{i_4i_6} + g^{i_1i_2}g^{i_3i_6}g^{i_4i_5} + g^{i_1i_3}g^{i_2i_6}g^{i_4i_5} + \nonumber \\
&g^{i_1i_6}g^{i_2i_3}g^{i_4i_5} + g^{i_1i_4}g^{i_2i_5}g^{i_3i_6} + g^{i_1i_5}g^{i_2i_4}g^{i_3i_6} + g^{i_1i_4}g^{i_2i_6}g^{i_3i_5} + \nonumber \\
&g^{i_1i_6}g^{i_2i_4}g^{i_3i_5} + g^{i_1i_5}g^{i_2i_6}g^{i_3i_4} + g^{i_1i_6}g^{i_2i_5}g^{i_3i_4} )
\end{align}
where $g^{ij} = (\Sigma^{-1})_{ij}$.  These all agree with \citep[\S 4.5]{KassVos97} to the extent that our equations and theirs overlap.

\bibliographystyle{abbrvnat}
\bibliography{R:/5050/CEB/Share/Staff/EnesAndDaniel/Bibliography/bibliography}

\end{document}